\newtheorem{thm}{Theorem}[section]
\newtheorem{cor}[thm]{Corollary}
\newtheorem{lem}[thm]{Lemma}
\newtheorem{prop}[thm]{Proposition}
\theoremstyle{definition}
\newtheorem{defn}[thm]{Definition}
\newtheorem{hyp}[thm]{Presupposition}
\theoremstyle{remark}
\newtheorem{rem}{Remark}
\numberwithin{equation}{section}
\newcommand{\NN}{\mathbb{N}}
\newcommand{\RR}{\mathbb{R}}
\newcommand{\ZZ}{\mathbb{Z}}
\newcommand{\Circle}{\mathbb{S}^{1}}
\newcommand{\DiffS}{\mathrm{Diff}^{\infty}(\mathbb{S}^{1})}
\newcommand{\VectS}{\mathrm{Vect}(\mathbb{S}^{1})}
\newcommand{\CS}{\mathrm{C}^{\infty}(\mathbb{S}^{1})}
\newcommand{\Homeo}{\mathrm{Homeo}^{+}(\mathbb{S}^{1})}
\newcommand{\D}[1]{\mathcal{D}^{#1}(\mathbb{S}^{1})}
\newcommand{\HH}[1]{H^{#1}(\mathbb{S}^{1})}
\newcommand{\norm}[1]{\left\Vert#1\right\Vert}
\newcommand{\abs}[1]{\left\vert#1\right\vert}
\newcommand{\op}[1]{\mathbf{op}\left(#1\right)}
\DeclareMathOperator{\diam}{diam} %
\begin{document}

\title[Geodesic Completeness]{Geodesic Completeness for Sobolev $H^{s}$-metrics on the Diffeomorphism Group of the Circle}%

\author{Joachim Escher}
\address{Institute for Applied Mathematics, University of Hanover, D-30167 Hanover, Germany}
\email{escher@ifam.uni-hannover.de}

\author{Boris Kolev}
\address{Aix Marseille Universit\'{e}, CNRS, Centrale Marseille, I2M, UMR 7373, 13453 Marseille, France}
\email{boris.kolev@math.cnrs.fr}

\subjclass{Euler equation; diffeomorphism group; fractional Sobolev metrics}%
\subjclass[2010]{58D05, 35Q53}%

\date{\today}%
\begin{abstract}
We prove that the weak Riemannian metric induced by the fractional Sobolev norm $H^s$ on the diffeomorphism group of the circle is geodesically complete, provided that $s>3/2$.
\end{abstract}
\maketitle

\section{Introduction}
\label{sec:introduction}

The interest in right-invariant metrics on the diffeomorphism group of the circle started when it was discovered by Kouranbaeva~\cite{Kou1999} that the Camassa--Holm equation~\cite{CH1993} can be recast as the Euler equation of the right-invariant metric on $\DiffS$ induced by the $H^{1}$ Sobolev inner product on the corresponding Lie algebra $\CS$. The well-posedness of the geodesics flow for the right-invariant metric induced by the $H^{k}$ inner product was obtained by Constantin and Kolev~\cite{CK2003}, for $k \in \NN $, $k \ge 1$, following the pioneering work of Ebin and Marsden~\cite{EM1970}. These investigations have been extended to the case of fractional order Sobolev spaces $H^{s}$ with $s\in \RR_+$, $s \ge 1/2$ by Escher and Kolev~\cite{EK2012}. The method used to establish local existence of geodesics is to extend the metric and its spray to the Hilbert approximation $\D{q}$ (the Hilbert manifold of diffeomorphisms of class $H^{q}$) and then to show that the (extended) spray is smooth. This was proved to work in \cite{EK2012} for $s \ge 1/2$ provided we choose $q >3/2$ and $q\ge 2s$. The well-posedness on $\DiffS$ follows as $q\to \infty$ from a regularity preserving result of the geodesic flow.

A Riemannian metric is \emph{strong} if at each point it induces a topological isomorphism between the tangent space and the cotangent space. It is \emph{weak} if it defines merely an injective linear mapping between the tangent space and the cotangent space. Note that on $\DiffS$ only weak metrics exist. Furthermore we also mention that the extended metric on $\D{q}$ is not strong but only weak as soon as $q>2s$.

On a Banach manifold equipped with a strong metric, the geodesic semi-distance induced by the metric is in fact a distance \cite{Lan1999}. This is no longer true for weak metrics. It was shown by Bauer, Bruveris, Harms, and Michor~\cite{BBHM2013} that this semi-distance identically vanishes for the $H^{s}$ metric if $0 \le s \le 1/2$, whereas it is a distance for $s > 1/2$. This distance is nevertheless probably not complete on $T\D{q}$. Indeed, although for a strong metric topological completeness implies geodesic completeness, this is generally not true for a weak metric. Finally, we recall that the metric induced by the $H^1$-norm (or equivalently by $A := I-D^{2}$) is not geodesically complete, c.f. \cite{CE2000}. The main result of this paper is the following.

\begin{thm}\label{thm:main}
Let $s>3/2$ be given. Then the geodesic flow on $\D{q}$ for $q \ge 2s + 1$ and on $\DiffS$, respectively, is complete for the weak Riemannian metric
induced by the $\HH{s}$-inner product.
\end{thm}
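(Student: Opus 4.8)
The plan is to rule out finite-time blow-up of a maximal geodesic by combining conservation of energy with the Lagrangian (transport) form of the Euler equation. Let $A$ be the inertia operator of the $\HH{s}$-metric, so that $\langle u,v\rangle_{\HH{s}}=\langle Au,v\rangle_{L^{2}}$ with $A=(1-D^{2})^{s}$, and write $m=Au$ for the momentum. A geodesic on $\D{q}$ is a curve $t\mapsto\varphi(t)\in\D{q}$; setting $u:=\dot\varphi\circ\varphi^{-1}\in\HH{q}$, right-invariance yields the Euler equation $m_{t}+um_{x}+2u_{x}m=0$, which is equivalent to the conservation law
\[
  m(t)=\bigl((\varphi(t)^{-1})_{x}\bigr)^{2}\,\bigl(m_{0}\circ\varphi(t)^{-1}\bigr),\qquad m_{0}=Au_{0}.
\]
By the smoothness of the geodesic spray on $\D{q}$ (\cite{EK2012}, using $q\ge 2s+1$), every geodesic extends to a maximal interval $(-T_{-},T_{+})$, and if $T_{+}<\infty$ then $(\varphi(t),\dot\varphi(t))$ leaves every bounded subset of $T\D{q}$ whose closure lies in $T\D{q}$; that is, as $t\uparrow T_{+}$ either $\norm{\varphi(t)}_{\HH{q}}+\norm{\dot\varphi(t)}_{\HH{q}}\to\infty$ or $\inf_{x}\varphi_{x}(t,x)\to 0$. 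I will show that neither occurs, hence $T_{+}=\infty$; the case $T_{-}=\infty$ then follows from the time-reversal symmetry $u(t)\mapsto-u(-t)$ of the Euler equation, and the assertion for $\DiffS=\bigcap_{q}\D{q}$ follows since, by the regularity-preserving property of \cite{EK2012}, a geodesic with $C^{\infty}$ data is a geodesic on every $\D{q}$ with $q\ge 2s+1$.

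The first ingredient is conservation of energy: along a geodesic the squared speed $\norm{\dot\varphi(t)}_{\varphi(t)}^{2}=\norm{u(t)}_{\HH{s}}^{2}=\langle m(t),u(t)\rangle_{L^{2}}$ is a constant $E$. Because $s>3/2$, the embedding $\HH{s}\hookrightarrow C^{1}(\Circle)$ gives $\norm{u(t)}_{C^{1}}\le c_{s}\sqrt{E}=:M$ for all $t$. Integrating $\partial_{t}\log\varphi_{x}=u_{x}\circ\varphi$ yields $\abs{\log\varphi_{x}(t,x)-\log\varphi_{x}(0,x)}\le M\abs{t}$, so on every bounded time interval $\varphi_{x}$ is pinched between two positive constants; in particular $\inf_{x}\varphi_{x}(t,x)$ cannot tend to $0$ in finite time, and $\norm{\varphi(t)^{-1}}_{C^{1}}$ is likewise controlled.

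The second, and main, ingredient is an a priori bound on the high Sobolev norm. Since $A\colon\HH{q}\to\HH{q-2s}$ is an isomorphism, $\norm{u(t)}_{\HH{q}}\sim\norm{m(t)}_{\HH{q-2s}}$, and $q-2s<q$. Applying the tame product, composition and inversion estimates in Sobolev spaces to the conservation law above, together with the $C^{1}$-pinching of $\varphi$, gives on each interval $[0,T]$ with $T<T_{+}$ a bound
\[
  \norm{u(t)}_{\HH{q}}\sim\norm{m(t)}_{\HH{q-2s}}\le C(M,T,u_{0})\,\bigl(1+\norm{\varphi(t)}_{\HH{q-2s}}\bigr),
\]
the essential point being that only a \emph{lower} norm of $\varphi$ appears on the right. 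Feeding this into the composition estimate for $\dot\varphi=u\circ\varphi$ in its tame form $\norm{u\circ\varphi}_{\HH{q}}\le C(\norm{\varphi_{x}}_{L^{\infty}},\norm{1/\varphi_{x}}_{L^{\infty}})(\norm{u}_{\HH{q}}+\norm{u}_{C^{1}}\norm{\varphi}_{\HH{q}})$, in which the top-order derivative of $\varphi$ enters only linearly with coefficient $\norm{u}_{C^{1}}$, we obtain
\[
  \frac{d}{dt}\norm{\varphi(t)}_{\HH{q}}\le\norm{\dot\varphi(t)}_{\HH{q}}\le C(M,T,u_{0})\bigl(1+\norm{\varphi(t)}_{\HH{q}}\bigr).
\]
Gronwall's lemma then bounds $\norm{\varphi(t)}_{\HH{q}}$, and hence $\norm{\dot\varphi(t)}_{\HH{q}}=\norm{u(t)\circ\varphi(t)}_{\HH{q}}$, on $[0,T]$. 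Thus $(\varphi,\dot\varphi)$ stays in a bounded region of $T\D{q}$ bounded away from the boundary, contradicting the blow-up alternative.

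The main obstacle is the a priori $\HH{q}$-estimate: one must establish the tame product, composition and inversion inequalities in the (generally non-integer) Sobolev spaces $\HH{q}$, $q\ge 2s+1$, with the sharp feature that the highest-order derivative of the diffeomorphism enters only linearly — this is what turns a potentially superlinear Gronwall inequality into a linear one — and one must verify that, after substituting $u=A^{-1}m$ and the Lagrangian formula for $m$, every remaining term is genuinely of lower order in $\varphi$. The hypotheses $s>3/2$ (embedding into $C^{1}$, validity of the fractional product estimates) and $q\ge 2s+1$ (so that $m$ lives in $\HH{q-2s}$ with $q-2s\ge 1$, the relevant spaces are multiplicative, and the local theory of \cite{EK2012} applies) are used precisely here.
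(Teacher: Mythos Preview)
Your approach is sound and reaches the same conclusion, but it differs from the paper's in how the central $H^{q}$ a priori estimate is obtained. The paper never invokes the Lagrangian conservation law $m\circ\varphi\cdot\varphi_{x}^{2}=m_{0}$; instead it runs Eulerian energy estimates directly on the Euler--Poincar\'e equation $m_{t}=-um_{x}-2u_{x}m$. After a first $L^{2}$ estimate on $m$ (using only the lower bound on $u_{x}$ and Gronwall, which already gives $\norm{u}_{H^{2s}}$ bounded), it bootstraps: a Friedrichs-mollified $H^{1}$ estimate, then $H^{\sigma}$ estimates for $\sigma>3/2$ via the Kato--Ponce commutator inequality, each step feeding the previously obtained bound on $\norm{u}_{H^{\sigma+1}}$ into Gronwall. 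The bounds on $\varphi_{x}$, $1/\varphi_{x}$, and $\norm{\varphi_{x}}_{H^{q-1}}$ are then derived \emph{afterwards} from the bound on $\norm{u}_{H^{q}}$, rather than the other way around as in your scheme.

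Your Lagrangian route trades the commutator machinery for tame composition and inversion estimates in fractional Sobolev spaces. Two small corrections: the exponent on the right of your displayed bound should be $q-2s+1$ rather than $q-2s$, since the factor $((\varphi^{-1})_{x})^{2}$ costs one derivative; and the constant there will in general depend \emph{polynomially} on $\norm{\varphi}_{H^{q-2s+1}}$, not linearly, so a single Gronwall on $\norm{\varphi}_{H^{q}}$ does not close directly --- you still need an induction on the Sobolev index (the step from $H^{k}$ to $H^{k+1}$ uses only $\norm{\varphi}_{H^{k+2-2s}}$, which is genuinely lower since $2s>3$). The paper's Eulerian argument has the same inductive flavor but is more self-contained, since the required tools (Kato--Ponce, mollifier commutators) are classical; your approach is more geometric and sidesteps mollification entirely, but, as you acknowledge, hinges on the sharp tame composition inequalities in $\HH{q}$ for non-integer $q$, which take some work to pin down.
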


Completeness results for groups of diffeomorphisms on $\RR^{n}$ have been studied in \cite{TY2005} and in \cite{MM2013}. In both papers,
stronger conditions on $s$ had been presupposed: Compared to our setting $s$ has to be larger than $7/2$ in \cite{TY2005} and an \emph{integer}
larger than $2$ in \cite{MM2013}, respectively. Additionally, in the work of \cite{TY2005, MM2013}, the phenomenon that the diffeomorphisms of an
orbit with finite extinction time may degenerate in the sense of the remarks following Corollary~\ref{cor:blow-up} is not reported on.

Let us briefly give an outline of the paper. In Section~\ref{sec:righ-invariant-metrics}, we introduce basic facts on right-invariant metrics on $\DiffS$ and we recall a well-posedness result for related geodesic flows. In Section~\ref{sec:distance}, we introduce a complete metric structure on suitable Banach approximations of $\DiffS$, which allows us to describe the precise blow-up mechanism of finite time geodesics. This is the subject matter of Section~\ref{sec:blow-up}. In Section~\ref{sec:global_solutions}, we prove our main result, Theorem~\ref{thm:main}. In Appendix~\ref{app:mollifiers}, we recall the material on Friedrichs mollifier that have been used throughout the paper.

\section{Right-invariant metrics on $\DiffS$}
\label{sec:righ-invariant-metrics}

Let $\DiffS$ be the group of all smooth and orientation preserving diffeomorphism on the circle. This group is naturally equipped with a \emph{Fr\'{e}chet manifold} structure; it can be covered by charts taking values in the \emph{Fr\'{e}chet vector space} $\CS$ and in such a way that the change of charts are smooth mappings (a smooth atlas with only two charts may be constructed, see for instance~\cite{GR2007}).

Since both the composition and the inversion are smooth for this structure we say that $\DiffS$ is a \emph{Fr\'{e}chet-Lie group}, c.f. \cite{Ham1982}. Its Lie algebra, $\VectS$, is the space of smooth vector fields on the circle. It is isomorphic to $\CS$ with the Lie bracket given by
\begin{equation*}
    [u,v] = u_{x}v - uv_{x}.
\end{equation*}

From an analytic point of view, the Fr\'{e}chet Lie group $\DiffS$ may be viewed as an inverse limit of \emph{Hilbert manifolds}. More precisely, recall that the Sobolev space $\HH{q}$ is defined as the completion of $\CS$ for the norm
\begin{equation*}
  \norm{u}_{\HH{q}}:= \left( \sum_{n \in \ZZ}(1 + n^{2})^{q}\abs{\hat{u}_{n}}^{2} \right)^{1/2},
\end{equation*}
where $q\in \RR^{+}$ and where $\hat u_n$ stands for the $n-$th Fourier coefficient of $u\in L^{2}(\Circle)$. Let $\D{q}$ denote the set of all orientation preserving homeomorphisms $\varphi$ of the circle $\Circle$, such that both $\varphi$ and $\varphi^{-1}$ belong to the fractional Sobolev space $\HH{q}$. For $q > 3/2$, $\D{q}$ is a \emph{Hilbert manifold} and a \emph{topological group} \cite{EM1970}. It is however not a \emph{Lie group} because neither composition, nor inversion in $\D{q}$ are \emph{smooth}, see again \cite{EM1970}. We have
\begin{equation*}
    \DiffS = \bigcap_{q > \frac{3}{2}} \D{q}.
\end{equation*}

\begin{rem}\label{rem:trivial_bundle}
Like any Lie group, $\DiffS$ is a \emph{parallelizable} manifold:
\begin{equation*}
  T\DiffS \sim \DiffS \times \CS.
\end{equation*}
What is less obvious, however, is that $T\D{q}$ is also a \emph{trivial bundle}. Indeed, let
\begin{equation*}
  \mathfrak{t}: T\Circle \to \Circle \times \RR
\end{equation*}
be a \emph{smooth trivialisation} of the tangent bundle of $\Circle$. Then
\begin{equation*}
  T\D{q} \to \D{q}\times \HH{q}, \qquad \xi \mapsto \mathfrak{t} \circ \xi
\end{equation*}
is a \emph{smooth vector bundle isomorphism} (see~\cite[p.~107]{EM1970}).
\end{rem}

A \emph{right-invariant} metric on $\DiffS$ is defined by an inner product on the Lie algebra $\VectS = \CS$. In the following we assume that this inner product is given by
\begin{equation*}
   \langle u,v \rangle = \int_{\Circle} (Au)v \, dx,
\end{equation*}
where $A:\CS \to \CS$ is a $L^{2}$-symmetric, positive definite, invertible \emph{Fourier multiplier} (i.e. a continuous linear operator on $\CS$ which commutes with $D:=d/dx$). For historical reasons going back to Euler's work \cite{Eul1765a}, $A$ is called the \emph{inertia operator}.

By translating the above inner product, we obtain an inner product on each tangent space $T_{\varphi} \DiffS$
\begin{equation}\label{eq:right-invariant-metric}
   \langle \eta, \xi \rangle_{\varphi} = \langle \eta\circ\varphi^{-1}, \xi\circ\varphi^{-1} \rangle_{id} = \int_{\Circle} \eta (A_{\varphi}\xi) \varphi_{x} \,dx,
\end{equation}
where $\eta, \xi \in T_{\varphi} \DiffS$ and $A_{\varphi} = R_{\varphi} \circ A \circ R_{\varphi^{-1}} $, and $R_{\varphi}(v):=v\circ\varphi$. This defines a smooth weak Riemannian metric on $\DiffS$.

This weak Riemannian metric admits the following \emph{geodesic spray}\footnote{A Riemannian metric on a manifold $M$ defines a smooth function on $TM$, given by half the square norm of a tangent vector. The corresponding Hamiltonian vector field on $TM$, relatively to the pullback of the canonical symplectic structure on $T^{*}M$ is called the \emph{geodesic spray}.}
\begin{equation}\label{eq:geodesic-spray}
   F : (\varphi,v)\mapsto \left(\varphi,v,v, S_{\varphi}(v)\right)
\end{equation}
where
\begin{equation*}
  S_{\varphi}(v):=\left(R_{\varphi}\circ S\circ R_{\varphi^{-1}} \right)(v),
\end{equation*}
and $S$ is a quadratic operator on the Lie algebra given by:
\begin{equation*}
  S(u):= A^{-1}\left\{ [A,u]u_{x}-2(Au)u_{x}\right\}.
\end{equation*}

A {\em{geodesic}} is an integral curve of this second order vector field, that is a solution $(\varphi,v)$ of
\begin{equation}\label{eq:geodesic-equations}
\left\{
\begin{aligned}
    \varphi_{t} &= v,
    \\
    v_{t} &= S_{\varphi}(v),
\end{aligned}
\right.
\end{equation}
Given a geodesic $(\varphi,v)$, we define the {\em{Eulerian velocity}} as
\begin{equation*}
  u := v\circ\varphi^{-1}.
\end{equation*}
Then $u$ solves
\begin{equation}\label{eq:Euler-equation}
  u_{t} = -A^{-1}\left[ u(Au)_{x} + 2(Au)u_{x} \right],
\end{equation}
called the \emph{Euler equation} defined by the inertia operator $A$.

\begin{rem}
When $A$ is a \emph{differential operator} of order $r \ge 1$ then the quadratic operator
\begin{equation*}
  S(u)= A^{-1}\left\{ [A,u]u_{x}-2(Au)u_{x}\right\}
\end{equation*}
is of order $0$ because the commutator $[A,u]$ is of order not higher than $r-1$. One might expect, that for a larger class of operators $A$, the quadratic operator $S$ to be of order $0$ and consequently the second order system \eqref{eq:geodesic-equations} can be viewed as an ODE on $T\D{q}$.
\end{rem}

\begin{defn}
A Fourier multiplier $A=\op{a(k)}$ with symbol $a$ is of \emph{order} $r\in \RR$ if there exists a constant $C >0$ such that
\begin{equation*}
  \abs{a(k)} \le \,C \left(1+k^{2}\right)^{r/2},
\end{equation*}
for every $k \in \ZZ$. In that case, for each $q \ge r$, the operator $A$ extends to a bounded linear operator from $\HH{q}$ to $H^{q-r}(\Circle)$. In this paper we only consider symmetric operators, i.e. $a(k)\in\mathbb{R}$ for all $k\in\mathbb{Z}$.
\end{defn}

When $A$ is a differential operator of order $r \ge 1$, the map
\begin{equation}\label{eq:conjugate-mapping}
  \varphi \mapsto A_{\varphi}, \quad \D{q} \to \mathcal{L}(\HH{q},\HH{q-r})
\end{equation}
is smooth (it is in fact \emph{real analytic}) for $q > 3/2$ and $q \ge r$. Indeed, in this case $A_{\varphi}$ is a linear differential operator with coefficients consisting of polynomial expressions of $1/\varphi_{x}$ and of the derivatives of $\varphi$ up to order $r$. Unfortunately, this argument \emph{does not apply} to a general \emph{Fourier multiplier} $A=\op{p(k)}$. In that case, even if $A$ extends to a bounded linear operator from $\HH{q}$ to $\HH{q-r}$, one cannot conclude directly that the mapping $\varphi \mapsto A_{\varphi}$ is smooth, because the mapping
\begin{equation*}
  \varphi \mapsto R_{\varphi}, \quad \D{q} \to \mathcal{L}(\HH{\sigma},\HH{\sigma})
\end{equation*}
is \emph{not even continuous}\footnote{The map $(\varphi,u) \mapsto u\circ \varphi$ is however continuous but not differentiable.}, for any choice of $\sigma\in [0,q]$.

Let us now precisely formulate the conditions that will be required on the inertia operator subsequently.

\begin{hyp}\label{hyp:inertia_operator}
  The following conditions will be assumed on the inertia operator $A$:
  \begin{enumerate}
    \item[(a)] $A = \op{a(k)}$ is a Fourier multiplier of order $r \ge 1$, or equivalently, $a(k) = \mathcal{O}(\abs{k}^{r})$;\\
    \item[(b)] For all $q \ge r$, $A: \HH{q} \to \HH{q-r}$ is a bounded isomorphism, or equivalently, for all $k \in \ZZ$, $a(k) \ne 0$ and $1/a(k) = \mathcal{O}(\abs{k}^{-r})$;\\
    \item[(c)] For each $q >3/2$ with $q\ge r$, the mapping
    \begin{equation*}
      \varphi \mapsto A_{\varphi}, \quad \D{q} \to \mathcal{L}(\HH{q},\HH{q-r})
    \end{equation*}
    is smooth.
  \end{enumerate}
\end{hyp}

In~\cite{EK2012} we have specified conditions on the symbol of $A$ which guarantee that $A$ satisfies presupposition \ref{hyp:inertia_operator}. Particularly, inertia operators of the form of Bessel potentials, i.e.
\begin{equation*}
  \Lambda^{2s} := \op{ \left( 1+ k^{2} \right)^{s} },
\end{equation*}
which generate the inner product of the fractional order Sobolev space $\HH{s}$
\begin{equation*}
  (u,v)\mapsto \langle\Lambda^{s} u\vert\Lambda^{s} v\rangle_{L^{2}},\quad u,\,v\in\HH{s},
\end{equation*}
meet these conditions, provided that $s \ge 1/2$.

If the conditions~\ref{hyp:inertia_operator} are satisfied, then expression~\eqref{eq:right-invariant-metric} defines a smooth, weak Riemannian metric on $\D{q}$, provided that $q>3/2$ and $q\ge r$. Moreover, it can be shown that the spray $F$ defined by equation~\eqref{eq:geodesic-spray} extends to a smooth vector field $F_{q}$ on $T\D{q}$, which is the geodesic spray of the metric, c.f. \cite[Theorem 3.10]{EK2012}. In that case, the \emph{Picard-Lindel\"{o}f Theorem} on the Banach manifold $T\D{q}$ ensures that, given any initial data $(\varphi_{0},v_{0})\in T\D{q}$, there is a unique \emph{non-extendable solution} $(\varphi,v)$ of \eqref{eq:geodesic-equations}, defined on a maximal interval $I_q(\varphi_{0},v_{0})$, satisfying the initial condition
\begin{equation*}
  (\varphi(0),v(0)) = (\varphi_{0},v_{0}).
\end{equation*}

A remarkable observation due to Ebin and Marsden (see \cite[Theorem 12.1]{EM1970}) states that, if the initial data $(\varphi_{0},v_{0})$ is smooth, then the maximal time interval of existence $I_q(\varphi_{0},v_{0})$ is independent of the parameter $q$. This is an essential ingredient in the proof of the local existence theorem for geodesics on $\DiffS$ (see \cite{EK2012}).

\begin{thm}\label{thm:nonextend_sol}
Suppose that presupposotion~\ref{hyp:inertia_operator} hold true. Then, given any $(\varphi_{0},v_{0})\in T\DiffS$, there exists a unique non-extendable solution
\begin{equation*}
  (\varphi, v)\in C^\infty(J,T\DiffS)
\end{equation*}
of \eqref{eq:geodesic-equations}, with initial data $(\varphi_{0},v_{0})$, defined on the maximal interval of existence $J_{max}=(t^{-},t^{+})$. Moreover, the solution depends smoothly on the initial data.
\end{thm}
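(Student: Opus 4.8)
The plan is to establish Theorem~\ref{thm:nonextend_sol} by combining the well-posedness of the geodesic flow on each Hilbert approximation $\D{q}$ with the Ebin--Marsden no-loss-no-gain principle. First I would fix $q > 3/2$ with $q \ge r$ and invoke the fact (recalled above, following \cite[Theorem 3.10]{EK2012}) that under Presupposition~\ref{hyp:inertia_operator} the spray $F$ extends to a smooth vector field $F_q$ on $T\D{q}$. The Picard--Lindel\"of theorem on the Banach manifold $T\D{q}$ then yields, for smooth initial data $(\varphi_0,v_0) \in T\DiffS \subset T\D{q}$, a unique non-extendable integral curve $(\varphi,v) \in C^\infty(J_q, T\D{q})$ defined on a maximal interval $J_q = J_q(\varphi_0,v_0)$, depending smoothly on the initial data in the $\D{q}$-topology.

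The second step is to remove the $q$-dependence of the interval and to upgrade the regularity from $H^q$ to $C^\infty$. Here I would appeal to the Ebin--Marsden argument (\cite[Theorem 12.1]{EM1970}, cited in the excerpt): if the initial data are smooth, then $J_q(\varphi_0,v_0)$ is independent of $q$, so we may write $J_{max} = (t^-,t^+)$ unambiguously. The mechanism is that differentiating the geodesic equations \eqref{eq:geodesic-equations} and using the smoothness of $F_{q+1}$ shows that a solution which starts in $\D{q+1}$ stays in $\D{q+1}$ on the same interval on which it exists in $\D{q}$; iterating over all $q$ and using $\DiffS = \bigcap_{q>3/2}\D{q}$ together with Remark~\ref{rem:trivial_bundle} (so that $C^\infty$-convergence of the components is exactly convergence in every $\D{q}$) gives $(\varphi,v) \in C^\infty(J_{max}, T\DiffS)$. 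Uniqueness on $T\DiffS$ is inherited from uniqueness on any single $T\D{q}$, since two smooth solutions with the same data coincide as $\D{q}$-solutions.

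For the smooth dependence on initial data, I would note that smooth dependence holds on each $T\D{q}$ by the Banach-space Picard--Lindel\"of theorem, and that smoothness of the data-to-solution map into $C^\infty(J,T\DiffS)$ follows once one has smoothness into $C^\infty(J,T\D{q})$ for every $q$, again using the inverse-limit description of $\DiffS$; the key point is that the common interval $J_{max}$ can be taken locally uniform in the initial data, which is part of the conclusion of the Picard--Lindel\"of theorem combined with the $q$-independence of the maximal interval.

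The main obstacle is the no-loss-no-gain step: showing that the maximal existence interval does not shrink as $q$ increases, equivalently that $\D{q+1}$-regularity of the initial data is propagated by the flow with no loss of derivatives. This is exactly where the structure of $S$ as an operator of order zero (the commutator $[A,u]$ having order $\le r-1$, cf. the remark following \eqref{eq:Euler-equation}) together with the smoothness of $\varphi \mapsto A_\varphi$ in Presupposition~\ref{hyp:inertia_operator}(c) is essential; without it one could not control the higher Sobolev norms along the flow. Everything else is a routine transfer of the Banach-manifold ODE theory through the inverse limit, but this regularity-propagation argument must be carried out carefully, following the scheme of \cite{EM1970} and its adaptation in \cite{EK2012}.
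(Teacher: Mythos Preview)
Your proposal is correct and follows essentially the same approach as the paper. The paper does not give a self-contained proof of Theorem~\ref{thm:nonextend_sol}; it simply records, in the paragraph preceding the statement, the two ingredients you identify---Picard--Lindel\"of on $T\D{q}$ via the smoothness of the extended spray $F_q$, and the Ebin--Marsden observation that the maximal interval $I_q(\varphi_0,v_0)$ is independent of $q$ for smooth data---and then refers to \cite{EK2012} for the details.
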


As a corollary, we get well-posedness for the corresponding Euler equation~\eqref{eq:Euler-equation}.

\begin{thm}\label{thm:Euler eq}
Assume that the operator $A$ satisfies presupposition~\ref{hyp:inertia_operator}. Let $v_{0}\in\DiffS$ be given and denote by $J_{max}$ the maximal interval of existence for \eqref{eq:geodesic-equations} with the initial datum $(id_{\Circle},v_{0})$. Set $u:=v\circ\varphi^{-1}$. Then
$u\in C^\infty(J_{max},\CS)$ is the unique non-extendable solution of the Euler equation
\begin{equation}\label{eq:Euler initial}
\left\{
\begin{aligned}
  & u_{t} = -A^{-1}\left[ u(Au)_{x} + 2(Au)u_{x} \right],
  \\
  & u(0) = v_{0}.
\end{aligned}
\right.
\end{equation}
\end{thm}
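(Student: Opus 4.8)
The plan is to derive Theorem~\ref{thm:Euler eq} as a direct corollary of Theorem~\ref{thm:nonextend_sol} by transporting the geodesic $(\varphi,v)$ to the Eulerian picture via $u := v\circ\varphi^{-1}$, and then checking that (i) $u$ is smooth in $(t,x)$ with the claimed regularity $u\in C^\infty(J_{max},\CS)$, (ii) $u$ solves the Euler equation~\eqref{eq:Euler-equation} with the stated initial datum, and (iii) $u$ is the \emph{unique non-extendable} solution, in particular that its maximal interval of existence coincides with $J_{max}$.

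First I would establish regularity. Fix $t\in J_{max}$. By Theorem~\ref{thm:nonextend_sol}, $\varphi(t)\in\DiffS$ and $v(t)\in\CS$, and since inversion and composition are smooth on the Fr\'echet--Lie group $\DiffS$, the map $t\mapsto\varphi(t)^{-1}$ is smooth into $\DiffS$ and hence $t\mapsto u(t) = v(t)\circ\varphi(t)^{-1}$ is smooth into $\CS$; spatial smoothness is immediate since $u(t)\in\CS$. Thus $u\in C^\infty(J_{max},\CS)$. Next, I would verify the PDE: differentiating $u = v\circ\varphi^{-1}$ in $t$ and using $\varphi_t = v$, $v_t = S_\varphi(v) = (R_\varphi\circ S\circ R_{\varphi^{-1}})(v)$ together with the chain rule $(\partial_t(\varphi^{-1})) = -(\varphi^{-1})_x\,(v\circ\varphi^{-1}) = -u_x u$ — this is the standard computation converting the Lagrangian geodesic equation into its spatial form — one obtains
\begin{equation*}
  u_t = v_t\circ\varphi^{-1} + (v_x\circ\varphi^{-1})\cdot\partial_t(\varphi^{-1}) = S(u) = A^{-1}\left\{[A,u]u_x - 2(Au)u_x\right\},
\end{equation*}
which is exactly~\eqref{eq:Euler-equation} after expanding $[A,u]u_x = A(uu_x) - u(Au)_x$ and regrouping (note $A(uu_x) = \tfrac12 A((u^2)_x)$ so the terms recombine into $-A^{-1}[u(Au)_x + 2(Au)u_x]$). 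The initial condition is $u(0) = v_0\circ id_{\Circle}^{-1} = v_0$.

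For uniqueness and non-extendability, I would argue by the correspondence between the two flows. If $\tilde u\in C^\infty(\tilde J,\CS)$ is any solution of~\eqref{eq:Euler initial} on an interval $\tilde J\ni 0$, define $\tilde\varphi$ as the flow of the time-dependent vector field $\tilde u$, i.e.\ the solution of $\tilde\varphi_t = \tilde u\circ\tilde\varphi$, $\tilde\varphi(0) = id_{\Circle}$ (this ODE has a smooth solution in $\DiffS$ for as long as $\tilde u$ exists, since $\tilde u(t)\in\CS$), and set $\tilde v := \tilde u\circ\tilde\varphi$. Reversing the computation above shows $(\tilde\varphi,\tilde v)$ solves the geodesic system~\eqref{eq:geodesic-equations} with initial data $(id_{\Circle},v_0)$; by the uniqueness in Theorem~\ref{thm:nonextend_sol}, $(\tilde\varphi,\tilde v) = (\varphi,v)$ on $\tilde J\cap J_{max}$, whence $\tilde u = u$ there, and maximality of $J_{max}$ forces $\tilde J\subseteq J_{max}$. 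This gives both uniqueness and the fact that $u$ is non-extendable with maximal interval exactly $J_{max}$.

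The only genuinely delicate point is the passage from a solution $\tilde u$ of the Euler equation back to a geodesic, specifically the claim that the flow $\tilde\varphi$ of $\tilde u$ stays in $\DiffS$ and remains smooth in $t$ as a curve in $\DiffS$ for the whole lifespan of $\tilde u$; this is where one invokes that $\DiffS$ is a Fr\'echet--Lie group and that the flow of a smooth time-dependent vector field on the compact manifold $\Circle$ exists globally in $x$ and inherits the temporal smoothness — a standard but slightly technical fact (see~\cite{EM1970, Ham1982}). Everything else is the routine chain-rule calculation sketched above together with a direct appeal to Theorem~\ref{thm:nonextend_sol}.
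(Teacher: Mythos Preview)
The paper states Theorem~\ref{thm:Euler eq} as an immediate corollary of Theorem~\ref{thm:nonextend_sol} and gives no explicit proof, so your outline---transport the geodesic to Eulerian variables, verify the PDE, and recover uniqueness/maximality by running the correspondence backwards via the flow of a candidate solution $\tilde u$---is exactly the argument one is meant to supply, and it is sound.

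There is, however, a slip in your displayed computation. You write $\partial_t(\varphi^{-1}) = -(\varphi^{-1})_x\,(v\circ\varphi^{-1}) = -u_x u$; the first equality is correct but the second is not (it is $-(\varphi^{-1})_x\,u$, not $-u_x u$). More importantly, the conclusion $u_t = S(u)$ is false: one has $v_t\circ\varphi^{-1}=S(u)$ and $(v_x\circ\varphi^{-1})\,\partial_t(\varphi^{-1})=-u u_x$, so the chain rule gives $u_t = S(u) - u u_x$. Correspondingly, $S(u)$ is \emph{not} equal to $-A^{-1}[u(Au)_x + 2(Au)u_x]$; expanding $[A,u]u_x = A(uu_x) - u(Au)_x$ (using that $A$ commutes with $D$) yields
\[
  S(u) = u u_x - A^{-1}\big[u(Au)_x + 2(Au)u_x\big],
\]
and it is precisely the extra $uu_x$ that cancels the transport term, leaving $u_t = -A^{-1}[u(Au)_x + 2(Au)u_x]$. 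Your two errors happen to cancel, so the final identity is right, but the intermediate line should be corrected. The rest of the argument (regularity via smoothness of composition and inversion on $\DiffS$, and the uniqueness/non-extendability step through the flow of $\tilde u$) is fine.
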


It is also worth to recall that the metric norm along the flow is conserved.

\begin{lem}\label{lem:norm_conservation}
  Let $u$ be a solution to~\eqref{eq:Euler-equation} on the time interval $J$, then
  \begin{equation}\label{eq:norm_conservation}
    \norm{u(t)}_{A} = \left(\int_{\Circle} (Au)u \, dx\right)^{1/2}
  \end{equation}
  is constant on $J$.
\end{lem}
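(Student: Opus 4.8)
The plan is to show that $t\mapsto \norm{u(t)}_A^{2}=\int_{\Circle}(Au)u\,dx$ is constant by differentiating it in time and checking that the derivative vanishes identically. Conceptually this is just conservation of energy for the geodesic flow, which one expects from the right-invariance of the metric: the function $(\varphi,v)\mapsto\tfrac12\langle v,v\rangle_{\varphi}$ is the Hamiltonian of the geodesic spray $F$, hence constant along its integral curves, and right-invariance gives $\langle v,v\rangle_{\varphi}=\norm{v\circ\varphi^{-1}}_{A}^{2}=\norm{u}_{A}^{2}$. I would nonetheless give the direct analytic argument, since it is short and self-contained.

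First I would note that $u=v\circ\varphi^{-1}$ inherits enough regularity from the solution $(\varphi,v)$ — smooth in the setting of Theorem~\ref{thm:nonextend_sol}, and of sufficiently high Sobolev class in the Banach setting under Presupposition~\ref{hyp:inertia_operator} — for the differentiation under the integral sign and the integrations by parts below to be legitimate; alternatively one can first establish the identity for smooth data and then extend it by density together with the continuous dependence on the initial data.

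Then, differentiating and using that $A$ is a time-independent, $L^{2}$-symmetric operator,
\begin{equation*}
  \frac{d}{dt}\int_{\Circle}(Au)u\,dx = 2\int_{\Circle}(Au_{t})\,u\,dx,
\end{equation*}
and substituting $Au_{t}=-u(Au)_{x}-2(Au)u_{x}$ from the Euler equation~\eqref{eq:Euler-equation} yields
\begin{equation*}
  \frac{d}{dt}\int_{\Circle}(Au)u\,dx = -2\int_{\Circle}u^{2}(Au)_{x}\,dx - 4\int_{\Circle}u(Au)u_{x}\,dx.
\end{equation*}
An integration by parts in the first term, using $\int_{\Circle}u^{2}(Au)_{x}\,dx=-2\int_{\Circle}uu_{x}(Au)\,dx$, makes it cancel the second term, so the derivative is identically $0$ and $\norm{u(t)}_{A}$ is constant on $J$.

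There is no genuine obstacle in this lemma; the only point deserving a word of care is the justification of the term-by-term differentiation and of the integrations by parts, which is immediate for smooth solutions and carries over to the Hilbert-manifold solutions thanks to the mapping properties of $A$ and $A^{-1}$ recorded in Presupposition~\ref{hyp:inertia_operator}.
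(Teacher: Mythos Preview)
Your argument is correct. The paper does not actually supply a proof of this lemma: it is merely stated as a recalled fact (``It is also worth to recall that the metric norm along the flow is conserved''), so there is no proof in the paper to compare against. The direct computation you give --- differentiating $\int_{\Circle}(Au)u\,dx$, using the $L^{2}$-symmetry of $A$, substituting the Euler equation, and integrating by parts --- is the standard way to verify this conservation law, and your remark that it can equivalently be read as conservation of the Hamiltonian along the geodesic spray is also on point.
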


\section{A complete metric structure on $\D{q}$}
\label{sec:distance}

We recall that in what follows, $\Circle$ is the unit circle of the complex plane and that $\DiffS$ and $\D{q}$ may be considered as subset of the set $C^{0}(\Circle,\Circle)$ of all continuous maps of the circle. Besides the Banach manifold $\D{q}$ may be covered by two charts (see \cite{EK2011} for instance). We let
\begin{equation*}
  d_{0}(\varphi_{1},\varphi_{2}) := \max_{x \in \Circle} \abs{\varphi_{2}(x) -\varphi_{1}(x)}
\end{equation*}
be the $C^{0}$-distance between continuous maps of the circle. Endowed with this distance $C^{0}(\Circle,\Circle)$ is a complete metric space. Let $\Homeo$ be the group of orientation preserving homeomorphisms of the circle. Equipped with the induced topology, $\Homeo$ is a topological group, and each right translation $R_{\varphi}$ is an isometry for the distance $d_{0}$.

\begin{defn}
Given $q > 3/2$, we introduce the following distance on $\D{q}$
\begin{equation*}
  d_{q}(\varphi_{1},\varphi_{2}) := d_{0}(\varphi_{1},\varphi_{2}) + \norm{\varphi_{1x} - \varphi_{2x}}_{H^{q-1}} + \norm{1/\varphi_{1x} - 1/\varphi_{2x}}_{\infty}.
\end{equation*}
\end{defn}

\begin{lem}\label{lem:bound_diff}
Let $q>3/2$ be given and assume that $B$ is a bounded subset of $(\D{q},d_{q})$. Then
\begin{equation*}
  \inf_{\varphi\in B} \left(\min_{y\in\Circle} \varphi_{x}(y) \right) > 0.
\end{equation*}
\end{lem}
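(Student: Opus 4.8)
The plan is to exploit directly the third summand in the definition of $d_q$, namely $\norm{1/\varphi_{1x} - 1/\varphi_{2x}}_{\infty}$, which was built into the distance precisely for this kind of estimate. First I would dispose of the trivial case $B = \emptyset$ and otherwise fix an arbitrary reference point $\psi \in B$. Since $B$ is $d_q$-bounded, there is a constant $M > 0$ such that $d_q(\varphi, \psi) \le M$ for every $\varphi \in B$.

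Next I would record the elementary pointwise properties of elements of $\D{q}$. Because $q > 3/2$, we have $q - 1 > 1/2$ and hence the Sobolev embedding $\HH{q-1} \hookrightarrow C^{0}(\Circle)$; thus every $\varphi \in \D{q}$ is of class $C^{1}$. Being an orientation preserving homeomorphism with $C^{1}$ inverse, its derivative $\varphi_{x}$ is continuous and strictly positive on the compact set $\Circle$: indeed, differentiating $\varphi^{-1} \circ \varphi = \id$ via the chain rule shows $\varphi_{x}(y)$ never vanishes, and orientation preservation forces it to be positive. In particular $c_{0} := \min_{y \in \Circle} \psi_{x}(y) > 0$, so that $1/\psi_{x}$ is a well-defined continuous function with $\norm{1/\psi_{x}}_{\infty} = 1/c_{0} < \infty$.

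With these preliminaries the conclusion is immediate. For any $\varphi \in B$ the triangle inequality gives
\[
  \norm{1/\varphi_{x}}_{\infty} \le \norm{1/\varphi_{x} - 1/\psi_{x}}_{\infty} + \norm{1/\psi_{x}}_{\infty} \le d_q(\varphi, \psi) + \frac{1}{c_{0}} \le M + \frac{1}{c_{0}} =: \frac{1}{\delta},
\]
where $\delta > 0$ does not depend on $\varphi$. Since $\varphi_{x} > 0$ everywhere on $\Circle$, this bound rewrites as $\varphi_{x}(y) \ge \delta$ for all $y \in \Circle$ and all $\varphi \in B$, whence $\inf_{\varphi \in B}\bigl(\min_{y \in \Circle} \varphi_{x}(y)\bigr) \ge \delta > 0$.

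I do not expect any genuine obstacle here; the statement is essentially a reformulation of the design of $d_q$. The only point requiring (minor) care is the preliminary observation that for $q > 3/2$ the derivative $\varphi_{x}$ of an element of $\D{q}$ is continuous and nowhere vanishing, so that $1/\varphi_{x}$ lies in $C^{0}(\Circle)$ and the minimum of $\varphi_{x}$ is actually attained — after that the estimate is nothing more than the triangle inequality applied to the tailor-made third term of $d_q$.
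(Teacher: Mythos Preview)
Your proof is correct and uses essentially the same idea as the paper: the third summand $\norm{1/\varphi_{1x}-1/\varphi_{2x}}_{\infty}$ in $d_q$ together with the triangle inequality gives a uniform bound on $\norm{1/\varphi_x}_{\infty}$ over $B$, hence a uniform positive lower bound on $\varphi_x$. The only difference is cosmetic: the paper packages the same estimate as a contradiction argument (assuming the infimum is zero and exhibiting a $\varphi_1$ violating the diameter bound), whereas you argue directly and obtain the explicit lower bound $\delta = (M + 1/c_0)^{-1}$.
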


\begin{proof}
Let $M:=\diam B$, fix $\varphi_{0}\in B$ and put $\varepsilon:= 1/(M+\Vert1/\varphi_{0x}\Vert_{\infty})$. By hypothesis $M<\infty$, thus $\varepsilon >0$. Assume now by contradiction that
\begin{equation*}
  \inf_{\varphi\in B} \left(\min_{y\in\Circle}\varphi_{x}(y)\right) = 0.
\end{equation*}
Then there is a $\varphi_{1}\in B$ such that $\min_{y\in\Circle} \varphi_{1x}(y) < \varepsilon$. Using
\begin{equation*}
  \norm{1/\varphi_{1x}}_{\infty} = \max_{y\in\Circle}\left(\frac{1}{\varphi_{1x}(y)}\right) = \left(\min_{y\in\Circle} \varphi_{1x}(y)\right)^{-1} > \frac{1}{\varepsilon},
\end{equation*}
we find by the definition of $\varepsilon$ the contradiction:
\begin{equation*}
  M \ge d(\varphi_{1},\varphi_{0}) \ge \norm{1/\varphi_{1x} - 1/\varphi_{0x}}_{\infty} \ge \norm{1/\varphi_{1x}}_{\infty} - \norm{1/\varphi_{0x}}_{\infty} > M,
\end{equation*}
which completes the proof.
\end{proof}

\begin{prop}\label{prop:complete_metric_space}
Let $q>3/2$. Then $(\D{q},d_{q})$ is a complete metric space and its topology is equivalent to the Banach manifold topology on $\D{q}$.
\end{prop}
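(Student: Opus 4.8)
The plan is to prove the two assertions separately: first that $d_q$ is a genuine metric and that $(\D{q},d_q)$ is complete, and then that the metric topology coincides with the Banach manifold topology inherited from $\HH{q}$.

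That $d_q$ is a metric is essentially routine: symmetry and the triangle inequality are inherited termwise from $d_0$, the $H^{q-1}$-norm, and the sup-norm, and $d_q(\varphi_1,\varphi_2)=0$ forces $\varphi_1=\varphi_2$ already from the $d_0$-term. For completeness, I would take a Cauchy sequence $(\varphi_n)$ in $(\D{q},d_q)$. Since each of the three summands is then Cauchy in its respective complete space, there exist limits: $\varphi_n \to \varphi$ in $C^0(\Circle,\Circle)$, $\varphi_{nx} \to w$ in $H^{q-1}(\Circle)$, and $1/\varphi_{nx} \to \rho$ in $L^\infty$. Because $q-1>1/2$, the Sobolev embedding $H^{q-1}\hookrightarrow C^0$ gives $\varphi_{nx}\to w$ uniformly, hence $w = \varphi_x$ in the distributional sense and $\varphi$ is $C^1$ with $\varphi_x = w \in H^{q-1}$, so $\varphi \in \HH{q}$ after adjusting by the (integer-shift) lift. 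The sequence $(\varphi_n)$ is $d_q$-bounded, so Lemma~\ref{lem:bound_diff} yields a uniform lower bound $\varphi_{nx}\ge c>0$; passing to the limit gives $\varphi_x \ge c>0$, so $\varphi$ is an orientation-preserving $C^1$-diffeomorphism and $1/\varphi_x = \rho \in L^\infty$. Then $\varphi^{-1}$ is also $C^1$ with $(\varphi^{-1})_x = 1/(\varphi_x\circ\varphi^{-1})$; a standard bootstrap using $\varphi_x \in H^{q-1}$, $1/\varphi_x \in H^{q-1}$ (the latter because $H^{q-1}$ is a Banach algebra bounded away from zero), and the fact that composition with a fixed $\HH{q}$-diffeomorphism preserves $H^{q-1}$, shows $\varphi^{-1}\in\HH{q}$, hence $\varphi\in\D{q}$. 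Finally one checks $d_q(\varphi_n,\varphi)\to 0$: the first and third terms go to zero by the established convergences, and the second is exactly $\norm{\varphi_{nx}-\varphi_x}_{H^{q-1}}\to 0$.

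For the topology equivalence, I would show that on every $d_q$-bounded set the identity map $(\D{q},d_q)\to(\D{q},\|\cdot\|_{\HH{q}})$ and its inverse are continuous. One direction is immediate: convergence in the $\HH{q}$-norm controls $\varphi_{nx}\to\varphi_x$ in $H^{q-1}$ and, via the embedding $\HH{q}\hookrightarrow C^1$, also controls $d_0$ and (using the uniform positive lower bound on $\varphi_x$ valid on any convergent sequence) the term $\norm{1/\varphi_{nx}-1/\varphi_x}_\infty$, so $\HH{q}$-convergence implies $d_q$-convergence. Conversely, if $d_q(\varphi_n,\varphi)\to 0$ then $\varphi_{nx}\to\varphi_x$ in $H^{q-1}$ and $\varphi_n\to\varphi$ uniformly, which together already give $\varphi_n\to\varphi$ in $\HH{q}$ (the $H^{q-1}$-convergence of derivatives plus $L^2$-convergence of the functions themselves). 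So the two metrics define the same convergent sequences, and since both describe metrizable (hence first-countable) topologies, the topologies agree.

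The main obstacle is the regularity statement $\varphi^{-1}\in\HH{q}$ for the limit diffeomorphism, i.e. verifying that the limit object is actually an element of $\D{q}$ and not merely a $C^1$-diffeomorphism with $\varphi_x\in H^{q-1}$. The subtle point is that the distance $d_q$ only explicitly controls $\varphi^{-1}$ through the $L^\infty$-bound on $1/\varphi_x$, so one must recover the full Sobolev regularity of $\varphi^{-1}$ from that of $\varphi$. The key facts making this work are that $H^{q-1}(\Circle)$ is a Banach algebra (since $q-1>1/2$), that it is stable under $v\mapsto 1/v$ when $v$ is bounded away from zero, and that right composition $w\mapsto w\circ\varphi^{-1}$ maps $H^{q-1}$ to itself continuously when $\varphi\in\D{q}$; combining these with $(\varphi^{-1})_x = (1/\varphi_x)\circ\varphi^{-1}$ gives $(\varphi^{-1})_x\in H^{q-1}$ and hence $\varphi^{-1}\in\HH{q}$. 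Everything else reduces to bookkeeping with the three complete factors and the Sobolev embedding $H^{q-1}\hookrightarrow C^0$.
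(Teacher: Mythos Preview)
Your argument is correct and follows essentially the same route as the paper: both proofs extract limits in the three complete factor spaces, invoke Lemma~\ref{lem:bound_diff} to obtain a uniform positive lower bound on $\varphi_{nx}$, and conclude that the limit lies in $\D{q}$. You are in fact more explicit than the paper about why $\varphi^{-1}\in\HH{q}$; just note that the phrase ``composition with a fixed $\HH{q}$-diffeomorphism'' is mildly circular as written---the bootstrap should start from the $C^{1}$ information $(\varphi^{-1})_{x}\in L^{\infty}$ and climb via composition estimates depending only on $\norm{1/\varphi_{x}}_{\infty}$ and $\norm{\varphi_{x}}_{H^{\rho-1}}$ (this is exactly estimate~\eqref{eq:phi_inverse_estimate}).
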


\begin{proof}
Let $\tau$ be the Banach manifold topology on $\D{q}$ and $\tau_{d}$ be the metric topology. Then
\begin{equation*}
  id : (\D{q},\tau) \to (\D{q},\tau_{d})
\end{equation*}
is continuous because $\varphi \mapsto \varphi^{-1}$ is a homeomorphism of $\D{q}$ (equipped with the manifold topology) and the fact that
\begin{equation*}
  d_{0}(\varphi_{1},\varphi_{2}) \lesssim \norm{\tilde{\varphi}_{1}-\tilde{\varphi_{2}}}_{H^{q}}
\end{equation*}
if $\tilde{\varphi}_{1}$ and $\tilde{\varphi}_{2}$ are lifts of $\varphi_{1}$ and $\varphi_{2}$ respectively. Conversely
\begin{equation*}
  id : (\D{q},\tau_{d}) \to (\D{q},\tau)
\end{equation*}
is continuous because given $\varphi_{0}$, there exists $\delta > 0$ such that if $d_{0}(\varphi_{0},\varphi) < \delta$, then $\varphi$ belongs to the same chart as $\varphi_{0}$ and in a local chart we have
\begin{equation*}
  \norm{\tilde{\varphi} - \tilde{\varphi_{0}}}_{H^{q}} \lesssim d_{q}(\varphi,\varphi_{0}).
\end{equation*}
This shows the equivalence of the two topologies.

Let now $(\varphi_{n})$ be a Cauchy sequence for the distance $d_{q}$. We observe first that $(\varphi_{n})$ converges in $C^{0}(\Circle,\Circle)$ to a map $\varphi$, that this map is $C^{1}$ and that $\varphi_{nx} \to \varphi_{x}$ in $\HH{q-1}$, because for $n$ large enough, all $\varphi_{n}$ belong to a same chart.
Invoking Lemma~\ref{lem:bound_diff}, we know that
\begin{equation*}
  \inf_{n\in\NN} \left(\min_{y\in\Circle}\varphi_{nx}(y)\right) > 0.
\end{equation*}
This implies that $\varphi_{x}>0$ and hence that $\varphi$ is a $C^{1}$-diffeomorphism of class $H^{q}$, and finally that $d_q(\varphi_{n},\varphi) \to 0$.
\end{proof}

\begin{lem}\label{lem:path_estimate}
Let $\varphi \in C^{1}(I,\D{q})$ be a path in $\D{q}$ and let $v:=\varphi_{t}$ be its velocity. Then
\begin{equation*}
  d_q(\varphi(t),\varphi(s)) \lesssim \abs{t-s} \max_{[s,t]} \norm{v}_{H^{q}} \left(1 + \max_{[s,t]} \norm{1/\varphi_{x}}^{2}_{\infty}\right)
\end{equation*}
for all $t,s \in I$.
\end{lem}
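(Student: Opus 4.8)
The plan is to estimate each of the three terms defining $d_q(\varphi(t),\varphi(s))$ separately, using the fundamental theorem of calculus along the path $\tau \mapsto \varphi(\tau)$. Throughout I write $I' := [s,t]$ (assuming $s \le t$ without loss of generality) and I use the implicit constants to absorb the Sobolev embedding $H^{q-1}(\Circle) \hookrightarrow C^0(\Circle)$, valid since $q > 3/2$.

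For the $C^0$-term, since $\varphi(t) - \varphi(s) = \int_s^t v(\tau)\,d\tau$ (working in a chart, or on lifts to $\RR$), we get immediately $d_0(\varphi(t),\varphi(s)) \le \abs{t-s}\,\max_{I'}\norm{v}_\infty \lesssim \abs{t-s}\,\max_{I'}\norm{v}_{H^q}$. Similarly, differentiating in $x$, $\varphi_x(t) - \varphi_x(s) = \int_s^t v_x(\tau)\,d\tau$, and taking the $H^{q-1}$-norm under the integral sign yields $\norm{\varphi_x(t) - \varphi_x(s)}_{H^{q-1}} \le \abs{t-s}\,\max_{I'}\norm{v_x}_{H^{q-1}} \le \abs{t-s}\,\max_{I'}\norm{v}_{H^q}$. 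These two terms are already dominated by $\abs{t-s}\max_{I'}\norm{v}_{H^q}$, hence a fortiori by the claimed right-hand side.

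The main work is the third term, $\norm{1/\varphi_x(t) - 1/\varphi_x(s)}_\infty$. Here I would write
\begin{equation*}
  \frac{1}{\varphi_x(t)} - \frac{1}{\varphi_x(s)} = -\int_s^t \frac{\partial_\tau \varphi_x(\tau)}{\varphi_x(\tau)^2}\,d\tau = -\int_s^t \frac{v_x(\tau)}{\varphi_x(\tau)^2}\,d\tau,
\end{equation*}
so that, taking the supremum norm inside the integral,
\begin{equation*}
  \norm{\frac{1}{\varphi_x(t)} - \frac{1}{\varphi_x(s)}}_\infty \le \abs{t-s}\,\max_{I'}\left(\norm{v_x}_\infty\,\norm{1/\varphi_x}_\infty^2\right) \lesssim \abs{t-s}\,\max_{I'}\norm{v}_{H^q}\,\max_{I'}\norm{1/\varphi_x}_\infty^2.
\end{equation*}
Adding the three estimates and factoring out $\abs{t-s}\max_{I'}\norm{v}_{H^q}$ gives the bound with the factor $\bigl(1 + \max_{I'}\norm{1/\varphi_x}_\infty^2\bigr)$, as claimed.

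The only genuinely delicate point is justifying that the elementary identities $\varphi(t)-\varphi(s) = \int_s^t v\,d\tau$ and $\varphi_x(t)-\varphi_x(s) = \int_s^t v_x\,d\tau$ hold as Bochner integrals in the relevant Banach spaces — that is, that $\varphi \in C^1(I,\D{q})$ really lets us differentiate under $\partial_x$ and that $1/\varphi_x$ is continuously differentiable in $\tau$ with the stated derivative. This follows because in a local chart $\varphi$ is a $C^1$ curve in $H^q(\Circle)$, so $\varphi_x$ is a $C^1$ curve in $H^{q-1}(\Circle) \hookrightarrow C^0(\Circle)$ with $\min \varphi_x > 0$ uniformly on the compact interval $I'$ (by Lemma~\ref{lem:bound_diff} applied to the bounded set $\varphi(I')$, or directly by continuity and compactness), and then $\tau \mapsto 1/\varphi_x(\tau)$ is a composition of the $C^1$ curve $\varphi_x$ with the smooth map $w \mapsto 1/w$ on the open set $\{w : \min w > 0\}$ of $C^0(\Circle)$; the chain rule in Banach spaces supplies the derivative $-v_x/\varphi_x^2$. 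The rest is routine.
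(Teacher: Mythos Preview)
Your proof is correct and follows essentially the same route as the paper: estimate each of the three summands in $d_q$ via the fundamental theorem of calculus along the path, using Sobolev embedding for the $C^0$-type terms and the chain rule $\partial_\tau(1/\varphi_x) = -v_x/\varphi_x^2$ for the last one. The only difference is cosmetic --- the paper phrases the third estimate directly as a bound rather than writing out the integral identity --- and your added paragraph justifying the Bochner-integral manipulations and the $C^1$-regularity of $\tau \mapsto 1/\varphi_x(\tau)$ is a welcome bit of extra care.
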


\begin{proof}
Let $\widetilde{\varphi} \in C^{1}(I,H^{q}(\RR))$ be a lift of the path $\varphi$. Given $s,t \in I$ with $s<t$, we have first
\begin{equation}\label{eq:estimate_1}
\begin{split}
  d_{0}(\varphi(t),\varphi(s)) & \lesssim \norm{\widetilde{\varphi}(t) - \widetilde{\varphi}(s)}_{\infty}\\
    & \le \int_{s}^{t} \norm{\varphi_{t}(\tau)}_{\infty} d\tau
    \lesssim \abs{t-s} \max_{[s,t]} \norm{v}_{H^{q}}.
\end{split}
\end{equation}
Next, we have
\begin{equation*}
  \varphi_{x}(t) - \varphi_{x}(s) = \int_{s}^{t} \varphi_{tx}(\tau) d\tau,
\end{equation*}
in $\HH{q-1}$ and hence
\begin{equation}\label{eq:estimate_2}
  \norm{\varphi_{x}(t) - \varphi_{x}(s)}_{H^{q-1}} \le \int_{s}^{t} \norm{\varphi_{tx}(\tau)}_{H^{q-1}} d\tau \le \abs{t-s} \max_{[s,t]} \norm{v}_{H^{q}}.
\end{equation}
Finally we have
\begin{equation}\label{eq:estimate_3}
\begin{split}
  \norm{1/\varphi_{x}(t) - 1/\varphi_{x}(s)}_{\infty} & \le \left(\max_{[s,t]} \norm{1/\varphi_{x}}_{\infty}\right)^{2} \int_{s}^{t} \norm{\varphi_{tx}(\tau)}_{\infty} \\
    & \lesssim \abs{t-s} \max_{[s,t]} \norm{v}_{H^{q}} \left(\max_{[s,t]} \norm{1/\varphi_{x}}_{\infty}\right)^{2}.
\end{split}
\end{equation}
Fusing \eqref{eq:estimate_1}, \eqref{eq:estimate_2}, and \eqref{eq:estimate_3} completes the proof.
\end{proof}

\section{The blow-up scenario for geodesics}
\label{sec:blow-up}

In the sequel a bounded set in $\D{q}$ will always mean \emph{bounded relative to the distance $d_{q}$} and a bounded set in $T\D{q}= \D{q}\times \HH{q}$ will mean \emph{bounded relative to the product distance}
\begin{equation*}
  d_{q}(\varphi_{1},\varphi_{2}) + \norm{v_{1} - v_{2}}_{H^{q}}.
\end{equation*}
The main result of this section is the following.

\begin{thm}\label{thm:Spray_bounded}
Let $q>3/2$ be given with $q \ge r$. Then the geodesic spray
\begin{equation*}
  F_{q} : (\varphi,v) \mapsto (v,S_{\varphi}(v))
\end{equation*}
is bounded on bounded sets of $\D{q} \times \HH{q}$.
\end{thm}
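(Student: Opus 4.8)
The plan is to decompose the spray $F_q(\varphi,v) = (v, S_\varphi(v))$ and show that each component is bounded on $d_q$-bounded subsets of $\D{q}\times\HH{q}$. The first component is trivial: if $B\subset\D{q}\times\HH{q}$ is bounded, then $\norm{v}_{H^q}$ is bounded by definition of the product distance, so it remains to control $\norm{S_\varphi(v)}_{H^q}$. Recall $S_\varphi(v) = (R_\varphi\circ S\circ R_{\varphi^{-1}})(v)$ with $S(u) = A^{-1}\{[A,u]u_x - 2(Au)u_x\}$, and write $u := v\circ\varphi^{-1}$, the Eulerian velocity. The key structural point, already emphasised in the remark following \eqref{eq:Euler-equation}, is that because $A$ has order $r$, the commutator $[A,u]$ has order at most $r-1$, so $S$ is a quadratic operator of order $0$ on the Lie algebra; the subtlety is that we need this order-$0$ property quantitatively on $\HH{q}$ with $q\geq r$, and then we must transport the estimate back through the right translations $R_{\varphi^{\pm 1}}$, which are \emph{not} continuous on Sobolev spaces.

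The heart of the argument, and the step I expect to be the main obstacle, is a Sobolev estimate of the form
\begin{equation*}
  \norm{S(u)}_{H^q} \lesssim P\!\left(\norm{u}_{H^q}\right)
\end{equation*}
with $P$ a polynomial (in fact quadratic) with coefficients depending only on $A$ and $q$. Since $A^{-1}:\HH{q-r}\to\HH{q}$ is bounded by Presupposition~\ref{hyp:inertia_operator}(b), it suffices to bound $\norm{[A,u]u_x - 2(Au)u_x}_{H^{q-r}}$. The term $(Au)u_x$ is handled by the algebra/multiplication properties of Sobolev spaces: $Au\in\HH{q-r}$ and $u_x\in H^{q-1}$, and since $q>3/2$ and $q-r \le q-1$, the product lies in $\HH{q-r}$ with the expected bound. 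The commutator term $[A,u]u_x$ is the delicate one: one shows $[A,u]$ maps $H^{q-1}\to\HH{q-r}$ with operator norm controlled by $\norm{u}_{H^q}$ — this is a commutator estimate in the spirit of Kato–Ponce, and the gain of one derivative in the commutator is exactly what compensates the loss in differentiating $u$. (In the special case $A=\Lambda^{2s}$ one could invoke standard pseudodifferential commutator estimates directly; for a general Fourier multiplier of order $r$ one argues on the Fourier side, splitting into low and high frequencies.)

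Once $\norm{S(u)}_{H^q}\lesssim P(\norm{u}_{H^q})$ is in hand, I close the argument as follows. First, $R_{\varphi^{-1}}$ and $R_\varphi$ satisfy, for $w\in\HH{q}$,
\begin{equation*}
  \norm{w\circ\varphi}_{H^q} \lesssim P\!\left(\norm{\varphi_x}_{H^{q-1}},\, \norm{1/\varphi_x}_\infty\right)\norm{w}_{H^q},
\end{equation*}
a composition estimate that follows by the Faà di Bruno / Moser-type calculus together with Lemma~\ref{lem:bound_diff} (which guarantees $\min\varphi_x$ is bounded below on a bounded set, so $\norm{1/\varphi_x}_\infty$ is bounded). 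Applying this with $w = v$ gives $\norm{u}_{H^q} = \norm{v\circ\varphi^{-1}}_{H^q}$ bounded in terms of the $d_q$-bound on $\varphi$ and the $H^q$-bound on $v$; applying it again with $w = S(u)$ gives $\norm{S_\varphi(v)}_{H^q} = \norm{S(u)\circ\varphi}_{H^q}$ bounded. Combining the two components, $F_q$ is bounded on $B$, which is the claim. The composition estimate for $\varphi^{-1}$ uses that $\varphi^{-1}$ is itself of class $H^q$ with $(\varphi^{-1})_x = 1/(\varphi_x\circ\varphi^{-1})$, whose relevant norms are again controlled by $d_q(\varphi,\mathrm{id})$ via Lemma~\ref{lem:bound_diff}.
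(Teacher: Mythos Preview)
Your proposal is correct and follows essentially the same route as the paper: decompose $S_{\varphi}=R_{\varphi}\circ S\circ R_{\varphi^{-1}}$, use that $S$ is a bounded quadratic operator on $\HH{q}$, and control $R_{\varphi}$, $R_{\varphi^{-1}}$ on $d_q$-bounded sets via composition estimates depending on $\norm{\varphi_x}_{H^{q-1}}$ and $\norm{1/\varphi_x}_{\infty}$. The paper is terser---it simply asserts that $S$ is bounded bilinear (this having been established in~\cite{EK2012}) and packages your composition estimates as Lemma~\ref{lem:Rphi_bounded} (a corollary of the inequalities~\eqref{eq:Rphi_estimate_firstcase}--\eqref{eq:phi_inverse_estimate} from~\cite{EK2012})---whereas you spell out the commutator gain for $[A,u]$ and the Fa\`a di Bruno--type control of $R_{\varphi^{\pm1}}$, but the logical structure is the same.
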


The proof of this theorem is based on Lemma~\ref{lem:Rphi_bounded}, which is itself a corollary of the following estimates obtained in~\cite[Appendix B]{EK2012}.
\begin{equation}\label{eq:Rphi_estimate_firstcase}
  \norm{R_{\varphi}}_{\mathcal{L}(\HH{\rho},\HH{\rho})} \le C^{1}_{\rho}\left(\norm{1/\varphi_{x}}_{L^{\infty}},\norm{\varphi_{x}}_{L^{\infty}}\right),
\end{equation}
for $0 \le \rho \le 1$,
\begin{equation}\label{eq:Rphi_estimate_secondcase}
  \norm{R_{\varphi}}_{\mathcal{L}(\HH{\rho},\HH{\rho})} \le C^{2}_{\rho}\left(\norm{1/\varphi_{x}}_{L^{\infty}},\norm{\varphi_{x}}_{H^{q-1}}\right),
\end{equation}
for $0 \le \rho \le 2$,
\begin{equation}\label{eq:Rphi_estimate_thirdcase}
  \norm{R_{\varphi}}_{\mathcal{L}(\HH{\rho},\HH{\rho})} \le C^{3}_{\rho}\left(\norm{1/\varphi_{x}}_{L^{\infty}},\norm{\varphi_{x}}_{L^{\infty}}\right) \norm{\varphi_{x}}_{H^{\rho-1}},
\end{equation}
for $3/2 < \rho \le 3$,
\begin{equation}\label{eq:Rphi_estimate_fourthcase}
  \norm{R_{\varphi}}_{\mathcal{L}(\HH{\rho},\HH{\rho})} \le C^{4}_{\rho}\left(\norm{1/\varphi_{x}}_{L^{\infty}},\norm{\varphi_{x}}_{H^{\rho-2}}\right) \norm{\varphi_{x}}_{H^{\rho-1}},
\end{equation}
for $\rho > 5/2$, and
\begin{equation}\label{eq:phi_inverse_estimate}
   \norm{(\varphi^{-1})_{x}}_{H^{\rho-1}} \lesssim C^{5}_{\rho}(\norm{1/\varphi_{x}}_{\infty}, \norm{\varphi_{x}}_{H^{\rho-1}}),
\end{equation}
for $\rho > 3/2$, where $C^{k}_{\rho}$ is a positive, continuous function on $(\RR^{+})^{2}$, for $k=1,\dots,5$.

\begin{lem}\label{lem:Rphi_bounded}
Let $q > 3/2$ and $ 0 \le \rho \le q$ be given. Then the mappings
\begin{equation*}
  \varphi \mapsto R_{\varphi}, \quad \D{q} \to \mathcal{L}(\HH{\rho},\HH{\rho})
\end{equation*}
and
\begin{equation*}
  \varphi \mapsto R_{\varphi^{-1}}, \quad \D{q} \to \mathcal{L}(\HH{\rho},\HH{\rho})
\end{equation*}
are bounded on bounded subsets of $\D{q}$.
\end{lem}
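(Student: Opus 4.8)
The plan is to reduce everything to the estimates \eqref{eq:Rphi_estimate_firstcase}--\eqref{eq:phi_inverse_estimate} quoted above, by first checking that on a $d_{q}$-bounded set $B$ every quantity appearing on their right-hand sides is bounded. Fix a reference point $\varphi_{0}\in B$ and set $M:=\diam B$. Applying the triangle inequality to the three summands defining $d_{q}$ gives, for every $\varphi\in B$,
\begin{equation*}
  \norm{1/\varphi_{x}}_{\infty} \le M + \norm{1/\varphi_{0x}}_{\infty}, \qquad \norm{\varphi_{x}}_{H^{q-1}} \le M + \norm{\varphi_{0x}}_{H^{q-1}};
\end{equation*}
(the first bound is exactly Lemma~\ref{lem:bound_diff}). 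Since $q-1>1/2$, the Sobolev embedding $\HH{q-1}\hookrightarrow C^{0}(\Circle)$ then yields a uniform bound on $\norm{\varphi_{x}}_{\infty}$ over $B$, and for any $0\le\rho\le q$ the inclusion $\HH{q-1}\hookrightarrow\HH{\rho-1}$ (with $H^{0}=L^{2}$ when $\rho\le 1$) gives uniform bounds on $\norm{\varphi_{x}}_{H^{\rho-1}}$ and on $\norm{\varphi_{x}}_{H^{\rho-2}}$.

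Next I would run a case distinction on $\rho$, matching each range to one of the cited estimates. For $0\le\rho\le 2$, estimate \eqref{eq:Rphi_estimate_secondcase} bounds $\norm{R_{\varphi}}_{\mathcal{L}(\HH{\rho},\HH{\rho})}$ by a continuous function of $\left(\norm{1/\varphi_{x}}_{\infty},\norm{\varphi_{x}}_{H^{q-1}}\right)$, hence uniformly on $B$. For $3/2<\rho\le 3$, estimate \eqref{eq:Rphi_estimate_thirdcase} does the same in terms of $\left(\norm{1/\varphi_{x}}_{\infty},\norm{\varphi_{x}}_{\infty}\right)$ and $\norm{\varphi_{x}}_{H^{\rho-1}}$. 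For $\rho>5/2$ (the relevant regime when $q>3$), estimate \eqref{eq:Rphi_estimate_fourthcase} gives the bound in terms of $\left(\norm{1/\varphi_{x}}_{\infty},\norm{\varphi_{x}}_{H^{\rho-2}}\right)$ and $\norm{\varphi_{x}}_{H^{\rho-1}}$. Since $[0,2]\cup(3/2,3]\cup(5/2,q]=[0,q]$, these three cases exhaust all admissible $\rho$, proving the assertion for $\varphi\mapsto R_{\varphi}$.

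For $\varphi\mapsto R_{\varphi^{-1}}$ I would show that $B^{-1}:=\set{\varphi^{-1}:\varphi\in B}$ is again $d_{q}$-bounded and then apply the part just proved with $\psi=\varphi^{-1}$. The $d_{0}$-component is automatically bounded by $\diam\Circle$. From $(\varphi^{-1})_{x}=(1/\varphi_{x})\circ\varphi^{-1}$ one gets $\norm{1/(\varphi^{-1})_{x}}_{\infty}=\norm{\varphi_{x}\circ\varphi^{-1}}_{\infty}=\norm{\varphi_{x}}_{\infty}$, which is bounded on $B$ by the first paragraph; and estimate \eqref{eq:phi_inverse_estimate} with $\rho=q$ bounds $\norm{(\varphi^{-1})_{x}}_{H^{q-1}}$ by a continuous function of $\left(\norm{1/\varphi_{x}}_{\infty},\norm{\varphi_{x}}_{H^{q-1}}\right)$, hence uniformly on $B$. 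Thus $B^{-1}$ is $d_{q}$-bounded and the previous paragraphs apply to it.

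The argument is essentially bookkeeping: there is no genuine analytic obstacle beyond correctly pairing each range of $\rho$ with one of the estimates from \cite[Appendix B]{EK2012} and verifying that the union of these ranges is all of $[0,q]$. The only point needing a separate (but short) argument is the $d_{q}$-boundedness of $B^{-1}$, and estimate \eqref{eq:phi_inverse_estimate} is precisely what makes it work.
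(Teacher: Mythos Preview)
Your proposal is correct and is precisely the argument the paper has in mind: the paper does not spell out a proof of the lemma but merely states that it is a corollary of the estimates \eqref{eq:Rphi_estimate_firstcase}--\eqref{eq:phi_inverse_estimate}, and your case analysis together with the observation that $B^{-1}$ is $d_q$-bounded (via \eqref{eq:phi_inverse_estimate}) is exactly the bookkeeping needed to turn that remark into a proof.
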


\begin{proof}[Proof of Theorem~\ref{thm:Spray_bounded}]
Recall that $S_{\varphi}(v) = R_{\varphi} \circ S \circ R_{\varphi^{-1}}$ where
\begin{equation*}
  S(u):= A^{-1}\left\{ [A,u]u_{x}-2(Au)u_{x}\right\}.
\end{equation*}
In particular, $S_{\varphi}(v)$ is quadratic in $v$ and
\begin{equation*}
  \norm{S_{\varphi}(v)}_{H^{q}} \le \norm{R_{\varphi}}_{\mathcal{L}(H^{q},H^{q})} \norm{S}_{\mathcal{L}(H^{q}\times H^{q},H^{q})} \norm{R_{\varphi^{-1}}}_{\mathcal{L}(H^{q},H^{q})} ^{2}\norm{v}_{H^{q}}^{2}.
\end{equation*}
Now, $S$ is a bounded bilinear operator and $R_{\varphi}$ and $R_{\varphi^{-1}}$ are bounded on bounded subsets of $\D{q}$ by Lemma~\ref{lem:Rphi_bounded}. This completes the proof.
\end{proof}

Our next goal is to study the behaviour of geodesics which do not exists globally, i.e. $t^{+}<\infty$ or $t^{-}>-\infty$. We have the following result, which is a consequence of Theorem~\ref{thm:Spray_bounded}.

\begin{cor}\label{cor:blow-up}
Assume that presupposition~\ref{hyp:inertia_operator} are satisfied and let
\begin{equation*}
  (\varphi, v)\in C^\infty((t^{-},t^{+}),T\D{q})
\end{equation*}
denote the non-extendable solution of the geodesic flow~\eqref{eq:geodesic-equations}, emanating from
\begin{equation*}
  (\varphi_{0},v_{0})\in T\D{q}.
\end{equation*}
If $t^{+}<\infty$, then
\begin{equation*}
   \lim_{t\uparrow t^{+}} \left[ d_q(\varphi_{0},\varphi(t))+\norm{v(t)}_{H^{q}} \right] = +\infty.
\end{equation*}
A similar statement holds true if $t^{-}>-\infty$.
\end{cor}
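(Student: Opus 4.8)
The plan is to argue by contradiction, exploiting the fact (Theorem~\ref{thm:Spray_bounded}) that the spray $F_q$ is bounded on $d_q$-bounded sets together with the completeness of $(\D{q},d_q)$ from Proposition~\ref{prop:complete_metric_space}. Suppose $t^{+}<\infty$ but the quantity $d_q(\varphi_0,\varphi(t))+\norm{v(t)}_{H^q}$ does \emph{not} tend to $+\infty$ as $t\uparrow t^{+}$; then there is a sequence $t_n\uparrow t^{+}$ along which $(\varphi(t_n),v(t_n))$ stays in a bounded set $B\subset T\D{q}$. I would like to upgrade this to a uniform bound on the whole orbit on $[0,t^{+})$, so that the orbit stays in a bounded, hence relatively complete, region and one can extend past $t^{+}$, contradicting maximality.

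First I would set $R:=\sup_{t\in[0,t^{+})}\norm{v(t)}_{H^q}$ and show $R<\infty$ using a continuation-type bound: on any subinterval $[0,T]\subset[0,t^{+})$ where $\norm{v}_{H^q}\le 2R_0$ (with $R_0$ an a priori constant) the path $\varphi$ stays $d_q$-bounded by Lemma~\ref{lem:path_estimate} (note $\norm{1/\varphi_x}_\infty$ is controlled as long as $\varphi$ stays $d_q$-bounded, by Lemma~\ref{lem:bound_diff}), so $(\varphi,v)$ stays in a fixed bounded set of $T\D{q}$; on that set $\norm{S_\varphi(v)}_{H^q}\le C$ by Theorem~\ref{thm:Spray_bounded}, whence $\norm{v(t)}_{H^q}\le\norm{v_0}_{H^q}+Ct^{+}$ via $v_t=S_\varphi(v)$ and Gr\"onwall/bootstrap. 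The cleanest route is actually: along the sequence $t_n$ where $\norm{v(t_n)}_{H^q}$ is bounded, use that $\norm{v}_{H^q}$ satisfies a differential inequality $\tfrac{d}{dt}\norm{v}_{H^q}^2\le C(\norm{v}_{H^q})\norm{v}_{H^q}^2$ with $C$ controlled by the $d_q$-size of $\varphi$, and the $d_q$-size of $\varphi$ is in turn controlled by $\int_0^t\norm{v}_{H^q}$ via Lemma~\ref{lem:path_estimate}; a closed Gr\"onwall argument on $[0,t^{+})$ then gives $\sup_{[0,t^{+})}\norm{v}_{H^q}<\infty$ outright, and consequently, again by Lemma~\ref{lem:path_estimate}, $\sup_{[0,t^{+})}d_q(\varphi_0,\varphi(t))<\infty$ as well.

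Next, having shown the orbit is contained in a $d_q$-bounded set of $T\D{q}$ on all of $[0,t^{+})$, I would show $(\varphi(t),v(t))$ actually converges in $T\D{q}$ as $t\uparrow t^{+}$. Since $v_t=S_\varphi(v)$ is uniformly $H^q$-bounded on the orbit, $t\mapsto v(t)$ is Lipschitz in $\norm{\cdot}_{H^q}$, hence Cauchy, hence converges to some $v^{+}\in\HH{q}$; and $\varphi_t=v$ is uniformly $H^q$-bounded, so by Lemma~\ref{lem:path_estimate} $t\mapsto\varphi(t)$ is $d_q$-Cauchy, hence converges by completeness (Proposition~\ref{prop:complete_metric_space}) to some $\varphi^{+}\in\D{q}$. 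Thus $(\varphi^{+},v^{+})\in T\D{q}$ and, by the smoothness of $F_q$ on $T\D{q}$ (from~\cite[Theorem 3.10]{EK2012}), applying the Picard--Lindel\"of theorem with initial data $(\varphi^{+},v^{+})$ lets us continue the solution strictly beyond $t^{+}$, contradicting non-extendability. The symmetric argument handles $t^{-}>-\infty$, and running it for each admissible $q$ (in particular $q\ge 2s+1$ as in Theorem~\ref{thm:main}) gives the statement; passing to $\DiffS$ will be done later via the regularity-preservation theorem~\ref{thm:nonextend_sol}.

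The main obstacle I anticipate is the bootstrap controlling $\norm{v}_{H^q}$ on the \emph{entire} interval $[0,t^{+})$ from boundedness merely along a sequence $t_n$: a priori the assumption only gives boundedness at the points $t_n$, and between them $\norm{v}_{H^q}$ could in principle be large. Closing this requires the coupled Gr\"onwall estimate above — one must simultaneously bound $\norm{v}_{H^q}$ by the spray estimate (whose constant depends on $d_q(\varphi_0,\varphi)$ and $\norm{1/\varphi_x}_\infty$) and bound those $\varphi$-quantities back in terms of $\int\norm{v}_{H^q}$ via Lemma~\ref{lem:path_estimate} and Lemma~\ref{lem:bound_diff} — and one must check that the feedback does not blow up on a finite interval, i.e. that the implicit differential inequality has a solution that stays finite up to $t^{+}$. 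The delicate point is that the constant in Theorem~\ref{thm:Spray_bounded} is only known to be \emph{bounded on bounded sets}, with no explicit growth rate, so one cannot simply integrate; instead the argument must be phrased as: if the orbit left every bounded set of $T\D{q}$ as $t\uparrow t^{+}$ then the stated $\limsup$ is $+\infty$, and if it does not leave some bounded set then (by the Lipschitz/Cauchy argument) it converges and extends — thereby avoiding the need for quantitative constants altogether. I expect the write-up to be organized around exactly this dichotomy.
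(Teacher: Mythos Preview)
Your proposal assembles the correct ingredients (Theorem~\ref{thm:Spray_bounded}, Lemma~\ref{lem:path_estimate}, completeness of $(\D{q},d_q)$, and Picard--Lindel\"of to extend), and you correctly identify the key difficulty: the negation of $\lim f=+\infty$ only yields boundedness of $f(t)=d_q(\varphi_0,\varphi(t))+\norm{v(t)}_{H^q}$ along a sequence, not on all of $[0,t^{+})$. However, your proposed resolution of this difficulty has a genuine gap. The dichotomy you state at the end --- ``either the orbit leaves every bounded set (then $\limsup=+\infty$), or it does not leave some bounded set (then it converges and extends)'' --- is not a true dichotomy. The negation of ``the orbit eventually leaves every bounded set'' is ``there exists a bounded set that the orbit revisits along a sequence $t_n\uparrow t^{+}$'', which is strictly weaker than ``the orbit stays in some bounded set''. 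The oscillatory case $\liminf_{t\uparrow t^{+}} f<\infty$, $\limsup_{t\uparrow t^{+}} f=+\infty$ is not covered, and your Lipschitz/Cauchy extension argument requires a uniform bound on the whole tail, not just along a subsequence. Your earlier coupled-Gr\"onwall suggestion does not close this either, for exactly the reason you note: Theorem~\ref{thm:Spray_bounded} gives no growth rate, so there is no differential inequality to integrate.

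The paper handles precisely this missing case by a level-crossing argument. Assuming $\liminf f<\infty$ and $\limsup f=+\infty$, one picks $r>\liminf f$ and uses continuity of $f$ to find sequences $s_k<t_k$, both converging to $t^{+}$, with $f(s_k)=r$, $f(t_k)=2r$, and $f\le 2r$ on each $[s_k,t_k]$. On the union of these intervals the orbit lies in the fixed bounded set $\{f\le 2r\}$, so Theorem~\ref{thm:Spray_bounded} gives a uniform bound $\norm{S_\varphi(v)}_{H^q}\le M$ there; combining this with Lemma~\ref{lem:path_estimate} (and the triangle inequality for $d_q$) yields $r=f(t_k)-f(s_k)\le C\abs{t_k-s_k}$ for a constant $C$ independent of $k$. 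Since $\abs{t_k-s_k}\to 0$, this is a contradiction. This is the step your outline is missing; once you have it, your extension argument (Cauchy in $d_q$ and in $H^q$, then Picard--Lindel\"of) disposes of the complementary case $\sup_{[0,t^{+})}f<\infty$ exactly as in the paper.
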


\begin{proof}
Suppose that $t^{+}<\infty$ and set
\begin{equation*}
  f(t) := d_{q}(\varphi_{0},\varphi(t))+\norm{v(t)}_{H^{q}}
\end{equation*}
where $(\varphi(t),v(t))\in T\D{q}$ is the solution of \eqref{eq:geodesic-equations} at time $t\in(t^{-},t^{+})$, emanating from $(\varphi_{0},v_{0})$.

(i) Note first that $f$ cannot be bounded on $[0,t^{+})$. Otherwise, the spray $F_{q}(\varphi(t),v(t))$ would be bounded on $[0,t^{+})$ by Theorem~\ref{thm:Spray_bounded}. In that case, given any sequence $(t_{k})$ in $[0,t^{+})$ converging to $t^{+}$, we would conclude, invoking Lemma~\ref{lem:path_estimate}, that $(\varphi(t_{k}))$ is a Cauchy sequence in the complete metric space $(\D{q},d_{q})$. Similarly, we would conclude that the sequence $(v(t_{k}))$ is a Cauchy sequence in the Hilbert space $\HH{q}$. Then, by the Picard-Lindel\"{o}f theorem, we would deduce that the solution could be extended beyond $t^{+}$, which would contradict the maximality of $t^{+}$.

(ii) We are going to show now that
\begin{equation*}
  \lim_{t\nearrow t^{+}} f(t) = + \infty.
\end{equation*}
If this was wrong, then we would have
\begin{equation*}
  \liminf_{t\nearrow t^{+}} f < +\infty \quad \text{and} \quad \limsup_{t\nearrow  t^{+}} f = +\infty .
\end{equation*}
But then, using the continuity of $f$, we could find $r >0$ and two sequences $(s_{k})$ and $(t_{k})$ in $[0,t^{+})$, each converging to $t^{+}$, with
\begin{equation*}
 s_{k} < t_{k}, \quad f(s_{k}) =r, \quad f(t_{k}) =2r
\end{equation*}
and such that
\begin{equation*}
  f(t) \le 2r, \quad \forall t \in \bigcup_{k} [s_{k},t_{k}].
\end{equation*}
However, by Theorem~\ref{thm:Spray_bounded}, we can find a positive constant $M$ such that
\begin{equation*}
  \norm{S_{\varphi}(v)}_{H^{q}} \le M,
\end{equation*}
for all $(\varphi,v)\in T\D{q}$ satisfying
\begin{equation*}
  d_{q}(\varphi_{0},\varphi) + \norm{v}_{H^{q}} \le 2r.
\end{equation*}
We would get therefore, using again Lemma~\ref{lem:path_estimate}, that
\begin{equation*}
  r = f(t_{k}) - f(s_{k}) \le C \abs{t_{k}-s_{k}}, \quad \forall k\in \mathbb{N},
\end{equation*}
for some positive constant $C$, which would lead to a contradiction and completes the proof.
\end{proof}

Assume that $t^{+}<\infty$. Then Corollary~\ref{cor:blow-up} makes it clear that there are only two possible blow-up scenarios: either the solution $(\varphi(t),v(t))$ becomes large in the sense that
\begin{equation*}
  \lim_{t\to t^{+}} \left(\norm{\varphi_{x}(t)}_{H^{q-1}} + \norm{v(t)}_{H^{q}}\right)=\infty,
\end{equation*}
or the family of diffeomorphisms $\{\varphi(t)\,;\, t\in(t^{-},t^{+})\}$ becomes singular in the sense that
\begin{equation*}
  \lim_{t\to t^{+}} \left(\min_{x\in\Circle}\{\varphi_{x}(t,x)\}\right)=0.
\end{equation*}

It is however worth emphasizing that the blow-up result in Corollary~\ref{cor:blow-up} only represents a necessary condition. Indeed, for $A=I-D^{2}$, i.e. for the Camassa--Holm equation the precise blow-up mechanism is known (see~\cite{CE2000}): a classical solution $u$ blows up in finite time if and only if
\begin{equation}\label{cond_global}
  \lim_{t\to t^{+}} \left(\min_{x\in\Circle}\{u_{x}(t,x)\}\right)=-\infty,
\end{equation}
which is somewhat weaker than blow up in $\HH{2}$. Since it is known that any (classical) solution to the Camassa--Holm equation preserves the $H^1$ norm and thus stays bounded, one says that the blow up occurs as a \emph{wave breaking}. Note also that
\begin{equation*}
  u_{x}(t,x) = v_{x}\circ\varphi(t,x)\cdot\frac{1}{\varphi_{x}(t,x)} \quad \hbox{for} \quad (t,x)\in (t^{-},t^{+})\times\Circle.
\end{equation*}
Hence in the case of a wave breaking, either $\abs{v_{x}}$ becomes unbounded or $v_{x}$ becomes negative and $\varphi_{x}$ tends to $0$ as $t\uparrow t^{+}$.

On the other hand there are several evolution equations, different from the Camassa--Holm equation, e.g. the \emph{Constantin-Lax-Majda equation} \cite{CLM1985,Wun2010}, which corresponds to the case $A=\mathcal{H}D$, where $\mathcal{H}$ denotes the Hilbert transform, cf. \cite{EKW2012} for which the blow up mechanism is much less understood and so far no sharper results than blow up in $\HH{1+\sigma}$ for any $\sigma>1/2$ or pointwise vanishing of $\varphi_{x}$ seem to be known.

\section{Global solutions}
\label{sec:global_solutions}

Throughout this section, we suppose that the inertia operator $A$ satisfies conditions~\ref{hyp:inertia_operator}. We fix some $q \ge r + 1$, and we let
\begin{equation}\label{eq:sol-geodesic}
  (\varphi,v)\in C^\infty(J,T\D{q})
\end{equation}
be the unique solution of the Cauchy problem~\eqref{eq:geodesic-equations}, emanating from
\begin{equation*}
  (id_{\Circle},v_{0})\in T\D{q}
\end{equation*}
and defined on the \emph{maximal time interval} $J=(t^{-},t^{+})$. The corresponding solution $u=v\circ\varphi^{-1}$ of the Euler equation \eqref{eq:Euler initial} is a path
\begin{equation}\label{Euler-sol}
  u \in C^{0}(J,\HH{q})\cap C^{1}(J,\HH{q-1}),
\end{equation}
because
\begin{equation*}
  (\varphi,v) \mapsto v \circ \varphi^{-1}, \qquad \D{q} \times \HH{q} \to \HH{q-1},
\end{equation*}
is $C^{1}$ for $q>3/2$ (see~\cite[Corollary B.6]{EK2012}). Moreover, since $A$ is of order $r$, the \emph{momentum} $m(t):=Au(t)$ is defined as a path
\begin{equation}\label{eq:Euler-Poincare-sol}
  m \in C^{0}(J,\HH{q-r})\cap C^{1}(J,\HH{q-r-1}).
\end{equation}
It satisfies the \emph{Euler-Poincar\'{e} equation}
\begin{equation}\label{eq:Euler-Poincare-equation}
  m_{t} = -m_{x}u - 2mu_{x} \quad \text{in} \quad C(J,L^2(\Circle)).
\end{equation}
We will prove that the geodesic $(\varphi(t),v(t))$ is defined for all time, as soon as $u_{x}$ is bounded below, independently of a particular choice of the inertia operator $A$, provided that $r \ge 2$.

\begin{rem}
Global solutions in $\HH{q}$ ($q > 3/2$) of the Camassa--Holm equation, which corresponds to the special case where the inertia operator $A = 1 - D^{2}$, have been studied in \cite{Mis2002}. It was established there, that $u(t)$ is defined on $[0,\infty)$ provided $\norm{u}_{C^{1}}$ is bounded \cite[Theorem 2.3]{Mis2002}. A similar argument was used in \cite{MZ2009} to establish existence of solutions of the Euler equation for the inertia operator $A = (1 - D^{2})^{k}$, $k\ge 1$, for which $m(t)$ does not blow up in $L^{2}$.
\end{rem}

The main result of this section is the \emph{a priori} estimate contained in the following result.

\begin{thm}\label{thm:eulerian_velocity_bound}
Let $r \ge 2$ and $q \ge r + 1$ be given and let
\begin{equation*}
  u \in C^{0}(J,\HH{q})\cap C^{1}(J,\HH{q-1})
\end{equation*}
be the solution of~\eqref{eq:Euler-equation} with initial data $u_{0} \in \HH{q}$ on $J$. Let $I$ be some \emph{bounded} subinterval of $J$ and suppose that
\begin{equation*}
  \inf_{t \in I}\left(\min_{x\in\Circle}\{u_{x}(t,x)\}\right) > - \infty.
\end{equation*}
Then $\norm{u}_{H^{q}}$ is bounded on $I$.
\end{thm}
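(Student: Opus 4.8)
The plan is to establish the $H^q$ bound via a bootstrap on the Sobolev index, starting from the conserved $H^{r/2}$-type norm and climbing up to $H^q$ using the transport structure of the Euler--Poincar\'e equation~\eqref{eq:Euler-Poincare-equation}. First I would set up the standard energy estimates for the momentum $m = Au$. Differentiating $m$ in a Sobolev norm $H^\sigma$ along the flow and using \eqref{eq:Euler-Poincare-equation}, one gets $\frac{d}{dt}\norm{m}_{H^\sigma}^2 \lesssim (\norm{u_x}_{L^\infty} + \norm{u}_{H^\sigma}) \norm{m}_{H^\sigma}^2$ for an appropriate range of $\sigma$, by the usual commutator (Kato--Ponce) estimates together with the algebra property of $H^\sigma$ for $\sigma > 1/2$. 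The hypothesis gives a lower bound on $u_x$; combined with Lemma~\ref{lem:norm_conservation}, which controls $\norm{u}_A \sim \norm{u}_{H^{r/2}}$, and the sign structure of the momentum transport, I expect to first obtain an $L^\infty$ (or $L^2$, or $H^{r/2}$) bound on $u_x$ itself on the bounded interval $I$, upgrading the one-sided bound into a two-sided bound. This is where the condition $r \ge 2$ enters: it guarantees $r/2 \ge 1$, so that conservation of $\norm{u}_A$ already gives control of $\norm{u}_{H^1}$, hence of $\norm{u_x}_{L^2}$, which feeds the Gr\"onwall loop.

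The core of the argument is a Gr\"onwall/bootstrap: once $\norm{u}_{W^{1,\infty}}$ is bounded on $I$, I would run the energy estimate for $\norm{m}_{H^{q-r}}$, which by Hypothesis~\ref{hyp:inertia_operator}(b) is equivalent to $\norm{u}_{H^q}$. The differential inequality $\frac{d}{dt}\norm{m}_{H^{q-r}}^2 \lesssim (1 + \norm{u_x}_{L^\infty})\norm{m}_{H^{q-r}}^2 + (\text{lower order})$, with the lower-order terms already controlled by the previous steps, then yields by Gr\"onwall's lemma a bound on $\norm{m}_{H^{q-r}}$, hence on $\norm{u}_{H^q}$, on all of the bounded interval $I$ depending only on $|I|$, the initial data, the conserved norm, and the assumed lower bound on $u_x$. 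Because the estimates are obtained in terms of $\norm{u_x}_{L^\infty}$ and the conserved energy, they are manifestly independent of the particular inertia operator $A$ within the class of Hypothesis~\ref{hyp:inertia_operator}.

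To carry this out carefully, I would proceed in the following order. Step 1: record that $q - r \ge 1$ and that $H^\sigma$ is a Banach algebra for $\sigma > 1/2$, and that $A: H^\sigma \to H^{\sigma - r}$ and $A^{-1}: H^{\sigma-r}\to H^\sigma$ are bounded by Hypothesis~\ref{hyp:inertia_operator}(a),(b), so that $\norm{u}_{H^q} \simeq \norm{m}_{H^{q-r}}$. Step 2: from the one-sided bound on $u_x$ and conservation of $\norm{u}_A$, derive a bound on $\norm{u_x}_{L^\infty(I\times\Circle)}$; the transport equation for $u_x$ (differentiate $u_t = -A^{-1}[\dots]$, or work from the Euler--Poincar\'e form) together with the fact that a function on $\Circle$ with integral-zero derivative whose negative part is bounded also has bounded positive part after controlling $\norm{u}_{H^1}$, does the job. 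Step 3: the main energy estimate for $\norm{m}_{H^{q-r}}^2$, using $\langle \Lambda^{q-r}(m_x u + 2 m u_x), \Lambda^{q-r} m\rangle_{L^2}$ and a commutator estimate to absorb the top-order term $\langle u \,\Lambda^{q-r} m_x, \Lambda^{q-r} m\rangle = -\frac12\langle u_x \Lambda^{q-r}m, \Lambda^{q-r}m\rangle$. Step 4: apply Gr\"onwall on the bounded interval $I$ to conclude. The main obstacle I anticipate is Step 2: converting the purely one-sided hypothesis $\inf_I \min_x u_x > -\infty$ into the genuine two-sided $L^\infty$ control needed to close the loop, since a priori a one-sided bound on $u_x$ together with only $H^1$-type energy does not immediately bound $\sup u_x$; this presumably requires exploiting the specific monotone/transport structure of the momentum equation (for instance, tracking $\norm{m}_{L^1}$ or $\norm{m}_{L^2}$ along characteristics, where the sign of $u_x$ has a definite effect), and making sure the resulting constants are uniform over the admissible class of inertia operators.
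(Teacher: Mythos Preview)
Your strategy matches the paper's, but you have the difficulty backwards. The step you flag as the main obstacle is in fact the cleanest part, and it does not use Lemma~\ref{lem:norm_conservation} at all. Your own parenthetical suggestion (``tracking $\norm{m}_{L^2}$'') is exactly right: from~\eqref{eq:Euler-Poincare-equation} one computes directly
\[
\frac{d}{dt}\norm{m}_{L^2}^2 = -2\langle m,\, um_x + 2u_x m\rangle_{L^2} = -3\int_{\Circle} u_x\,m^2\,dx \;\le\; -3\Bigl(\min_{x\in\Circle}u_x(t,x)\Bigr)\norm{m}_{L^2}^2,
\]
so the one-sided lower bound alone controls $\norm{m}_{L^2}$ on $I$ via Gronwall. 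Then $A^{-1}:L^2\to H^r$ gives $\norm{u}_{H^r}$ bounded on $I$, and since $r\ge 2$ this yields $\norm{u}_{H^2}$ bounded, hence $\norm{u_x}_\infty$ bounded on $I$ by Sobolev embedding. That is the actual role of the hypothesis $r\ge 2$; the conserved norm $\norm{u}_A$ is never invoked in this theorem.

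Two further points on the bootstrap. First, the Kato--Ponce estimate yields
\[
\frac{d}{dt}\norm{m}_{H^\sigma}^2 \lesssim \norm{u}_{H^{\sigma+1}}\norm{m}_{H^\sigma}^2,
\]
not merely $(1+\norm{u_x}_\infty)\norm{m}_{H^\sigma}^2$; you therefore cannot jump straight to $\sigma=q-r$ but must run a genuine induction on $\sigma$ (starting at $\sigma=1$, each step turns a bound on $\norm{u}_{H^{\sigma+1}}$ into one on $\norm{m}_{H^\sigma}$, hence on $\norm{u}_{H^{\sigma+r}}$, gaining $r-1\ge 1$). Second, since $t\mapsto m(t)$ lies only in $C^1(J,H^{q-r-1})$, you cannot differentiate $\norm{m}_{H^\sigma}^2$ directly for the higher $\sigma$; the paper regularizes with Friedrichs mollifiers $J_\varepsilon$, proves the inequality for $\norm{J_\varepsilon m}_{H^\sigma}^2$ using a uniform $L^2$ bound on the commutator $[J_\varepsilon,uD]$, and then lets $\varepsilon\to 0$.
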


The approach used here is inspired by that of Taylor~\cite{Tay1991} and relies on \emph{Friedrichs mollifiers} (see Appendix~\ref{app:mollifiers}). It requires also the following commutator estimate due to Kato and Ponce \cite{KP1988} (see also~\cite{Tay2003}).

\begin{lem}\label{Kato-Ponce}
  Let $s >0$ and $\Lambda^{s} := \op{(1+k^{2})^{s/2}}$. If $u,v \in \HH{s}$, then
  \begin{equation}\label{eq:KP_estimate}
    \norm{\Lambda^{s}(uv)-u\Lambda^{s}(v)}_{L^{2}} \lesssim \norm{u_{x}}_{\infty} \norm{\Lambda^{s-1}v}_{L^{2}} + \norm{\Lambda^{s}u}_{L^{2}} \norm{v}_{\infty}
  \end{equation}
\end{lem}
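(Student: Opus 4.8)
The plan is to establish \eqref{eq:KP_estimate} through a Littlewood--Paley / Bony analysis of the commutator $[\Lambda^{s},u]v := \Lambda^{s}(uv) - u\Lambda^{s}(v)$. The guiding principle is that a crude estimate on the Fourier side only controls the commutator by the Wiener-algebra norm $\sum_{m}\abs{m}\,\abs{\hat{u}_{m}}$, which is strictly stronger than $\norm{u_{x}}_{\infty}$; dyadic localisation is what allows one to recover the $L^{\infty}$-norms. On $\Circle$ the frequencies are integers, so I would fix a smooth dyadic partition of unity and write $\Delta_{p}$ for the projection onto $\abs{k}\sim 2^{p}$ (with $\Delta_{0}$ collecting $\abs{k}\lesssim 1$) and $S_{p}=\sum_{p'<p}\Delta_{p'}$ for the associated low-pass filters, which are uniformly bounded on $L^{\infty}(\Circle)$. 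Expanding $uv=\sum_{p,p'}\Delta_{p}u\,\Delta_{p'}v$ and inserting into the commutator, I would split the double sum into three regimes: (I) $p\le p'-2$, (II) $p\ge p'+2$, and (III) $\abs{p-p'}\le 1$.

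The decisive contribution is Regime~(I), which produces the first term of \eqref{eq:KP_estimate}. Grouping by $p'$, the corresponding piece is $\sum_{p'}[\Lambda^{s},S_{p'-2}u]\,\Delta_{p'}v$. Since $S_{p'-2}u\,\Delta_{p'}v$ and $\Delta_{p'}v$ are both spectrally supported in $\abs{k}\sim 2^{p'}$, the operator $\Lambda^{s}$ may be replaced on them by a frequency-localised multiplier $\widetilde m_{p'}(D)$ whose symbol is $(1+k^{2})^{s/2}$ times a bump adapted to $\abs{k}\sim 2^{p'}$. Writing this commutator in physical space as
\begin{equation*}
  [\widetilde m_{p'}(D),a]b(x) = \int \check{\widetilde m}_{p'}(x-y)\,[a(x)-a(y)]\,b(y)\,dy, \qquad a=S_{p'-2}u,\ b=\Delta_{p'}v,
\end{equation*}
and using $\abs{a(x)-a(y)}\le \abs{x-y}\,\norm{\partial_{x}S_{p'-2}u}_{\infty}\lesssim \abs{x-y}\,\norm{u_{x}}_{\infty}$, I reduce matters to the $L^{1}$-size of the kernel $\abs{x-y}\,\check{\widetilde m}_{p'}(x-y)$. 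A scaling computation, $\widetilde m_{p'}(k)\sim 2^{sp'}\Phi(2^{-p'}k)$ with $\Phi$ a fixed bump, gives $\norm{\,\abs{\cdot}\,\check{\widetilde m}_{p'}}_{L^{1}}\lesssim 2^{(s-1)p'}$, whence Young's inequality yields
\begin{equation*}
  \norm{[\Lambda^{s},S_{p'-2}u]\,\Delta_{p'}v}_{L^{2}} \lesssim \norm{u_{x}}_{\infty}\,2^{(s-1)p'}\norm{\Delta_{p'}v}_{L^{2}}.
\end{equation*}
As these summands are spectrally localised near $\abs{k}\sim 2^{p'}$, they are almost orthogonal, and summing in $\ell^{2}$ with $\big(\sum_{p'}2^{2(s-1)p'}\norm{\Delta_{p'}v}_{L^{2}}^{2}\big)^{1/2}\simeq \norm{\Lambda^{s-1}v}_{L^{2}}$ produces exactly $\norm{u_{x}}_{\infty}\norm{\Lambda^{s-1}v}_{L^{2}}$.

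Regimes~(II) and~(III) produce the second term $\norm{\Lambda^{s}u}_{L^{2}}\norm{v}_{\infty}$. In Regime~(II), summing the low frequencies of $v$ gives terms $\Lambda^{s}(\Delta_{p}u\,S_{p-2}v)-\Delta_{p}u\,\Lambda^{s}(S_{p-2}v)$ localised near $\abs{k}\sim 2^{p}$; I would bound the two pieces \emph{separately} (no cancellation), each by $2^{sp}\norm{\Delta_{p}u}_{L^{2}}\norm{S_{p-2}v}_{\infty}\lesssim 2^{sp}\norm{\Delta_{p}u}_{L^{2}}\norm{v}_{\infty}$ via Bernstein, then sum in $\ell^{2}$ over $p$ by almost orthogonality, using $\big(\sum_{p}2^{2sp}\norm{\Delta_{p}u}_{L^{2}}^{2}\big)^{1/2}\simeq\norm{\Lambda^{s}u}_{L^{2}}$. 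Regime~(III), the high--high remainder $\sum_{\abs{p-p'}\le1}\Delta_{p}u\,\Delta_{p'}v$, is not localised away from the origin, so its output frequencies fill balls rather than annuli; I would bound $\Lambda^{s}$ of the remainder and the matching part of $u\Lambda^{s}v$ each by $\big(\sum_{p}2^{2sp}\norm{\Delta_{p}u}_{L^{2}}^{2}\big)^{1/2}\norm{v}_{\infty}$, the summation lemma for series supported in balls being precisely where the hypothesis $s>0$ is used.

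The main obstacle is the single-block commutator estimate in Regime~(I). A naive Fourier-multiplier bound controls $[\Lambda^{s},u]v$ only by $\sum_{m}\abs{m}\,\abs{\hat{u}_{m}}$, a norm strictly stronger than $\norm{u_{x}}_{\infty}$; the resolution is the physical-space representation above, in which the spectral localisation of $b=\Delta_{p'}v$ renders the relevant kernel genuinely $L^{1}$ after the one-derivative gain from $\abs{a(x)-a(y)}\le\abs{x-y}\norm{a'}_{\infty}$, so that only $\norm{u_{x}}_{\infty}$ enters. On $\Circle$ the sole extra point is that the rescaled kernels $\check{\widetilde m}_{p'}$ are periodisations of Schwartz functions concentrated at scale $2^{-p'}$: for $p'$ large the periodisation is harmless, while the finitely many low blocks are handled directly, so the estimates transfer verbatim from the real line.
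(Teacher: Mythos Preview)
The paper does not prove this lemma at all: it is quoted as a known commutator estimate due to Kato and Ponce \cite{KP1988}, with a pointer to Taylor \cite{Tay2003}, and then used as a black box in the proof of Theorem~\ref{thm:eulerian_velocity_bound}. So there is no ``paper's own proof'' to compare against.

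Your argument is the standard Littlewood--Paley/paraproduct proof of the Kato--Ponce estimate, essentially the one found in the cited references (Taylor in particular). The three-regime decomposition is correct: Regime~(I) captures the genuine commutator gain via the kernel representation and the mean-value bound $\abs{a(x)-a(y)}\le\abs{x-y}\norm{a'}_{\infty}$, yielding the $\norm{u_{x}}_{\infty}\norm{\Lambda^{s-1}v}_{L^{2}}$ term; Regimes~(II) and~(III) are handled without exploiting cancellation and contribute the $\norm{\Lambda^{s}u}_{L^{2}}\norm{v}_{\infty}$ term, with the hypothesis $s>0$ entering exactly where you say, in the ball-supported summation for the diagonal remainder. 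The remarks on periodisation for the circle are apposite and the almost-orthogonality bookkeeping is right. In short, your sketch is a correct and complete outline of the classical proof; the paper simply takes the result for granted.
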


\begin{proof}[Proof of Theorem~\ref{thm:eulerian_velocity_bound}]

(1) Let $m(t) = Au(t)$ for $t\in J$. Invoking \eqref{eq:Euler-Poincare-sol} and the fact that $q-r-1 \ge 0$, we conclude that the curve $[t \mapsto m(t)]$ belongs to $C^{1}(J,L^2(\Circle))$. Thus the Euler-Poincar\'{e} equation~\eqref{eq:Euler-Poincare-equation}
implies that
\begin{equation*}
  \frac{d}{dt}\norm{m}_{L^{2}}^{2} = -2 \left\langle m, m_{x}u + 2mu_{x}\right\rangle_{L^{2}}\quad \text{on}
\quad J.
\end{equation*}
Consequently, we get
\begin{equation*}
  \frac{d}{dt}\norm{m}_{L^{2}}^{2} \le -3 \min_{x\in\Circle}\{u_{x}(t,x)\} \norm{m}_{L^{2}}^{2},
\end{equation*}
and by virtue of Gronwall's lemma, we conclude that $\norm{m}_{L^{2}}$ is bounded on $I$. Recalling that $A^{-1}$ is a bounded operator from $L^{2}(\Circle)$ to $\HH{r}$, we see that $\norm{u}_{H^{r}}$ is bounded on $I$. This applies, in particular, to $\norm{u}_{H^{2}}$, because we assumed that $r \ge 2$.

(2) Our next goal is to derive an $H^1$ \textit{a priori} estimate for $m$. Since the curve $[t \mapsto m(t)]$ belongs merely to $C^{1}(J,\HH{q-r-1})$ and $q-r-1$ may be smaller than $1$, we need to replace it by the curve $t \mapsto J_{\varepsilon}m(t)$, where $J_{\varepsilon}$ is a \emph{Friedrichs' mollifier} with respect to the spatial variable in $\Circle$, cf. Appendix~\ref{app:mollifiers}. We note that $J_{\varepsilon}m \in C^1(J,C^\infty(\Circle))$. For this regularized curve $J_{\varepsilon}m$, we are going now to show that
\begin{equation}\label{eq:gronH1}
  \frac{d}{dt}\norm{J_{\varepsilon}m}_{H^{1}}^{2} \lesssim \norm{u}_{H^{2}} \norm{m}_{H^{1}}^{2},
\end{equation}
for $\varepsilon \in (0,1]$. To do so, note that
\begin{multline*}
  \frac{d}{dt}\norm{J_{\varepsilon}m}_{H^{1}}^{2} = -2 \int (J_{\varepsilon}m) (J_{\varepsilon}m_{x}u) - 4\int (J_{\varepsilon}m) (J_{\varepsilon}mu_{x}) \\
  -4\int (J_{\varepsilon}m_{x}) (J_{\varepsilon}mu_{xx}) -6\int (J_{\varepsilon}m_{x}) (J_{\varepsilon}m_{x}u_{x}) -2\int (J_{\varepsilon}m_{x}) (J_{\varepsilon}m_{xx}u).
\end{multline*}
Using Cauchy--Schwarz' inequality and Lemma~\ref{lem:J_norm_estimate}, the first four terms of the right hand-side can easily be bounded by $\norm{u}_{H^{2}} \norm{m}_{H^{1}}^{2}$, up to a positive constant independent of $\varepsilon$. The last term in the right hand-side can be rewritten as
\begin{equation*}
  \int (J_{\varepsilon}m_{x}) (uJ_{\varepsilon}m_{xx}) + \int (J_{\varepsilon}m_{x}) ([J_{\varepsilon},uD]m_{x}).
\end{equation*}
An integration by parts shows that the first term is bounded by $\norm{u_{x}}_{\infty} \norm{m}_{H^{1}}^{2}$. By Cauchy-Schwarz' inequality and Lemma~\ref{lem:J_commutator_estimate}, the same is true for the second term.

(3) Suppose now that $3/2 < \sigma \le q-r$. We are going to show that
\begin{equation}\label{eq:gronHsigma}
  \frac{d}{dt}\norm{J_{\varepsilon}m}_{H^{\sigma}}^{2} \lesssim \norm{u}_{H^{\sigma+1}} \norm{m}_{H^{\sigma}}^{2},
\end{equation}
for $\varepsilon \in (0,1]$. We have
\begin{equation*}
  \frac{d}{dt}\norm{J_{\varepsilon}m(t)}_{H^{\sigma}}^{2}
= -2 \left\langle \Lambda^{\sigma}J_{\varepsilon}m, \Lambda^{\sigma}J_{\varepsilon}(m_{x}u)\right\rangle_{L^{2}} - 4 \left\langle \Lambda^{\sigma}J_{\varepsilon}m, \Lambda^{\sigma}J_{\varepsilon}(mu_{x})\right\rangle_{L^{2}}.
\end{equation*}
Applying Cauchy-Schwarz' inequality, we first get
\begin{equation*}
  \left\langle \Lambda^{\sigma}J_{\varepsilon}m, \Lambda^{\sigma}J_{\varepsilon}(mu_{x})\right\rangle_{L^{2}} \le \norm{J_{\varepsilon}m}_{H^{\sigma}}\norm{J_{\varepsilon}(mu_{x})}_{H^{\sigma}}
\end{equation*}
and, by virtue of~\eqref{eq:FM1}, we have
\begin{equation*}
  \norm{J_{\varepsilon} m}_{H^{\sigma}} \norm{J_{\varepsilon}(mu_{x})}_{H^{\sigma}} \lesssim \norm{m}_{H^{\sigma}}\norm{mu_{x}}_{H^{\sigma}} \lesssim \norm{u}_{H^{\sigma+1}} \norm{m}_{H^{\sigma}}^{2},
\end{equation*}
uniformly in $\varepsilon$ (because $\HH{\sigma}$ is a multiplicative algebra as soon as $\sigma > 1/2$). Observing that $\Lambda^{\sigma}$ and $J_{\varepsilon}$ commute (see Appendix~\ref{app:mollifiers}), we have
\begin{multline}\label{eq:second_term}
  \left\langle \Lambda^{\sigma}J_{\varepsilon}m, \Lambda^{\sigma}J_{\varepsilon}(m_{x}u)\right\rangle_{L^{2}} = \int J_{\varepsilon}\left( u \Lambda^{\sigma} m_{x}\right) J_{\varepsilon}\Lambda^{\sigma}m \\
  + \int J_{\varepsilon}\left( [\Lambda^{\sigma},u] m_{x}\right) J_{\varepsilon}\Lambda^{\sigma}m.
\end{multline}
By virtue of Cauchy--Schwarz' inequality, \eqref{eq:FM1} and the Kato--Ponce estimate (Lemma~\ref{Kato-Ponce}), the second term in the right hand-side of~\eqref{eq:second_term} is bounded (up to a constant independent of $\varepsilon$) by
\begin{equation*}
  \norm{u}_{H^{\sigma}} \norm{m}_{H^{\sigma}}^{2},
\end{equation*}
because $\norm{m_{x}}_{\infty} \lesssim \norm{m}_{H^{\sigma}}$ for $\sigma > 3/2$. Introducing the operator $L := uD$, the first term in the right hand-side of~\eqref{eq:second_term} can be written as
\begin{equation*}
  \int (J_{\varepsilon}L\Lambda^{\sigma}m)(J_{\varepsilon}\Lambda^{\sigma}m) = \int (LJ_{\varepsilon}\Lambda^{\sigma}m)(J_{\varepsilon}\Lambda^{\sigma}m) + \int ([J_{\varepsilon},L]\Lambda^{\sigma}m)(J_{\varepsilon}\Lambda^{\sigma}m).
\end{equation*}
We have first
\begin{equation*}
  \int (LJ_{\varepsilon}\Lambda^{\sigma}m)(J_{\varepsilon}\Lambda^{\sigma}m) = \frac{1}{2} \int \{(L+ L^{*})J_{\varepsilon}\Lambda^{\sigma}m\}(J_{\varepsilon}\Lambda^{\sigma}m).
\end{equation*}
But, since $L+ L^{*} = -u_{x}I$, we get
\begin{equation*}
  \int  \{ (L+ L^{*})J_{\varepsilon}\Lambda^{\sigma}m \}(J_{\varepsilon}\Lambda^{\sigma}m) \lesssim \norm{u_{x}}_{\infty} \norm{m}_{H^{\sigma}}^{2}.
\end{equation*}
Now, using Cauchy--Schwarz' inequality and Lemma~\ref{lem:J_commutator_estimate}, we have
\begin{equation*}
  \int ([J_{\varepsilon},L]\Lambda^{\sigma}m)(J_{\varepsilon}\Lambda^{\sigma}m) \lesssim \norm{u_{x}}_{\infty}\norm{m}_{H^{\sigma}}^{2}.
\end{equation*}
Combining these estimates, we obtain finally
\begin{equation*}
  \frac{d}{dt}\norm{J_{\varepsilon}m(t)}_{H^{\sigma}}^{2} \lesssim \norm{u}_{H^{\sigma+1}} \norm{m}_{H^{\sigma}}^{2}.
\end{equation*}

(4) If either $\sigma=1$ or $\sigma > 3/2$, we integrate \eqref{eq:gronH1} or \eqref{eq:gronHsigma}, respectively, over $[0,t]$ to get
\begin{equation*}
  \norm{J_{\varepsilon}m(t)}_{H^{\sigma}}^{2} \le \norm{J_{\varepsilon}m(0)}_{H^{\sigma}}^{2} + C \sup_{\tau \in [0,t]}\norm{u(\tau)}_{H^{\sigma+1}} \int_{0}^{t} \norm{m(\tau)}_{H^{\sigma}}^{2}\,d\tau, \quad t \in J,
\end{equation*}
for some positive constant $C$ (independent of $\varepsilon$). Again, letting $\varepsilon \to 0$ and invoking~\eqref{eq:FM2} in combination with Gronwall's lemma, we conclude that $\norm{m(t)}_{H^{\sigma}}$ is bounded on $I$, as soon as $\norm{u(t)}_{H^{\sigma+1}}$ is. Therefore, using an inductive argument, we deduce that $\norm{u(t)}_{H^{q}}$ is bounded on $I$. This completes the proof.
\end{proof}

We next derive estimates on the flow map induced by time-dependent vector fields. These results are independent of the geodesic flow~\eqref{eq:geodesic-equations}. Therefore we formulate them in some generality. Note that on a general Banach manifold, the flow of a continuous vector field may not exist~\cite{Die1950}. However, in the particular case we consider here, we have the following result.

\begin{prop}[Ebin-Marsden, \cite{EM1970}]
Let $q > 5/2$ be given and let $u \in C^{0}\left(I, \HH{q}\right)$ be a time dependent $H^{q}$ vector field. Then its flow $t \to \varphi(t)$ is a $C^{1}$ curve in $\D{q}$.
\end{prop}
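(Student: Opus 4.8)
The plan is to reconstruct the flow of a time-dependent $H^q$ vector field as a curve in $\D{q}$ and to show it has the claimed regularity by working in a chart and applying the classical Picard--Lindel\"of theorem on a Banach space, exactly as Ebin and Marsden do. First I would fix a finite atlas of two charts for $\D{q}$ (as recalled at the beginning of Section~\ref{sec:distance}) and lift everything to $H^q(\RR)$. The ODE to solve is $\varphi_t(t,x) = u(t,\varphi(t,x))$ with $\varphi(t_0)=\id$; the point is to view the right-hand side as a vector field on (an open subset of) the Banach space $H^q(\RR)$ (or rather on the Hilbert manifold $\D{q}$), namely $(t,\psi)\mapsto u(t)\circ\psi = R_\psi(u(t))$. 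The crucial input is that for $q>5/2$ the composition map $(\psi,w)\mapsto w\circ\psi$, $\D{q}\times\HH{q}\to\HH{q}$, is continuous, and more importantly that $\psi \mapsto R_\psi w$ is $C^1$ on $\D{q}$ with values in $\HH{q-1}$ while remaining $\HH{q}$-valued: this is the standard Ebin--Marsden regularity of the composition, valid precisely for $q>5/2$ because then $\HH{q-1}$ is a Banach algebra and one derivative can be spent on the argument.

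Next I would invoke the version of the Picard--Lindel\"of theorem for vector fields that are continuous in $t$ and merely locally Lipschitz (indeed $C^1$) in the space variable: since $u \in C^0(I,\HH{q})$ and the map $\psi \mapsto R_\psi(u(t))$ is locally Lipschitz from $\D{q}$ to $\HH{q}$ uniformly for $t$ in compact subsets of $I$ (again because $q>5/2$ makes $\HH{q}$ a multiplicative algebra and the chain-rule estimate $\norm{w\circ\psi_1 - w\circ\psi_2}_{H^q}\lesssim \norm{w}_{H^{q+?}}\dots$ — here one uses the sharper statement that $R_\psi$ depends Lipschitz-continuously on $\psi$ into $\HH{q}$, available for $q>5/2$), one obtains a unique local solution $\varphi \in C^1$ with values in $\D{q}$. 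Uniqueness and the standard continuation argument give a maximal solution on an open subinterval of $I$; to see that it is defined on all of $I$ one uses that $\norm{\varphi_x(t)}_{H^{q-1}}$ and $\norm{1/\varphi_x(t)}_\infty$ cannot blow up on a bounded time interval, since $\frac{d}{dt}\varphi_x = (u_x\circ\varphi)\varphi_x$ gives, via Gronwall, $\norm{1/\varphi_x(t)}_\infty \le \norm{1/\varphi_x(t_0)}_\infty \exp(\int \norm{u_x}_\infty)$ and a similar higher-order estimate controls $\norm{\varphi_x(t)}_{H^{q-1}}$ in terms of $\int_I \norm{u(\tau)}_{H^q}\,d\tau<\infty$; one can equally quote Lemma~\ref{lem:path_estimate} and Proposition~\ref{prop:complete_metric_space} to rule out escape from $(\D{q},d_q)$ in finite time.

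The $C^1$ regularity of $t\mapsto\varphi(t)$ in $\D{q}$ is then immediate: the solution curve is $C^1$ into the chart Banach space by construction of the ODE solution, and one only has to check that $\varphi_t(t) = u(t)\circ\varphi(t)$ indeed lies in $T_{\varphi(t)}\D{q}\cong\HH{q}$, which it does because $u(t)\in\HH{q}$ and $R_{\varphi(t)}$ maps $\HH{q}$ to $\HH{q}$ boundedly for $q>5/2$.

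The main obstacle is the same delicate point that forces the hypothesis $q>5/2$ rather than $q>3/2$: one needs the map $\psi\mapsto R_\psi w$ to be not merely continuous but Lipschitz (locally) as a map into $\HH{q}$ itself — recall the footnote warning that $(\varphi,u)\mapsto u\circ\varphi$ is continuous but \emph{not} differentiable, so one cannot naively apply the $C^1$ Picard--Lindel\"of. The resolution, due to Ebin--Marsden, is that although $\psi\mapsto R_\psi w$ loses one derivative when differentiated, it is genuinely Lipschitz into $\HH{q}$ when $q>5/2$ because then $\HH{q-1}$ controls the sup-norm of the first derivative and $\HH{q}$ is an algebra; this is exactly the threshold at which $\D{q}$ is a topological group and the composition estimates close. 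Once that Lipschitz estimate is in hand, the rest is the routine ODE-on-Banach-space machinery together with the non-blow-up argument above.
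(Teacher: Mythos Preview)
The paper does not prove this proposition: it is stated without proof and attributed to Ebin--Marsden~\cite{EM1970}. So there is no ``paper's own proof'' to compare against; I can only evaluate your sketch on its own merits.

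Your outline has the right shape (ODE on a Banach manifold, Picard--Lindel\"of, Gronwall to prevent escape), but the central technical claim is wrong. You assert that for $q>5/2$ and $w\in\HH{q}$ the map $\psi\mapsto w\circ\psi$ is locally Lipschitz from $\D{q}$ into $\HH{q}$. It is not. Writing $w\circ\psi_1 - w\circ\psi_2 = (\psi_1-\psi_2)\int_0^1 w_x\!\circ\!\psi_s\,ds$, the integrand lies only in $\HH{q-1}$ since $w_x\in\HH{q-1}$, and the product is at best $\HH{q-1}$; at the top order one is effectively asking for Lipschitz dependence of $L^2$ on translations, which fails. The footnote you cite is precisely warning you of this: $(\varphi,u)\mapsto u\circ\varphi$ is continuous but not differentiable on $\D{q}\times\HH{q}\to\HH{q}$, and in particular not locally Lipschitz in $\varphi$.

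The actual mechanism behind the hypothesis $q>5/2$ is different. What \emph{is} true is that for $w\in\HH{\sigma+1}$ the map $\psi\mapsto w\circ\psi$ is $C^1$ from $\D{\sigma}$ to $\HH{\sigma}$. Since $u(t)\in\HH{q}$, the vector field $\psi\mapsto u(t)\circ\psi$ is therefore $C^1$ (hence locally Lipschitz) as a map $\D{q-1}\to\HH{q-1}$, and this requires $q-1>3/2$, i.e.\ $q>5/2$. One then applies Picard--Lindel\"of on $\D{q-1}$ to obtain $\varphi\in C^1(I,\D{q-1})$, and afterwards upgrades the regularity: the equation $\varphi_{tx}=(u_x\circ\varphi)\varphi_x$ viewed as a linear ODE for $\varphi_x$ in $\HH{q-1}$, together with estimates of the type in Lemmas~\ref{lem:flow_bound_C1}--\ref{lem:flow_bound_Hq}, shows $\varphi(t)\in\D{q}$ and that $t\mapsto\varphi(t)$ is $C^1$ there. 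Your Gronwall/no-escape argument is fine once the local existence is set up correctly, but the local step must go through $\D{q-1}$, not $\D{q}$.
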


\begin{lem}\label{lem:flow_bound_C1}
  Let $u \in C^{0}\left( J, \HH{q}\right)$ be a time dependent vector field with $q > 3/2$. Assume that its associated flow $\varphi$ exists and that $\varphi\in C^{1}(J,\D{q})$. If $\norm{u_{x}}_{\infty}$ is bounded on any bounded subinterval of $J$, then $\norm{\varphi_{x}}_{\infty}$ and $\norm{1/\varphi_{x}}_{\infty}$ are bounded on any bounded subinterval of $J$.
\end{lem}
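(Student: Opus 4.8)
The plan is to derive an ODE-type estimate for $\norm{\varphi_x}_\infty$ directly from the flow equation and then handle $\norm{1/\varphi_x}_\infty$ by the same device. Writing the flow equation $\varphi_t = u(t)\circ\varphi$ and differentiating in the spatial variable gives
\begin{equation*}
  \varphi_{tx}(t,x) = u_x(t,\varphi(t,x))\,\varphi_x(t,x).
\end{equation*}
This is legitimate because $\varphi\in C^1(J,\D{q})$ with $q>3/2$, so $\varphi_x(t,\cdot)\in\HH{q-1}\hookrightarrow C^0(\Circle)$, and $u_x(t,\cdot)\circ\varphi(t,\cdot)$ is continuous; one should first note that the identity holds in $\HH{q-1}$ (hence pointwise) for each fixed $t$, and that $t\mapsto\varphi_x(t,x)$ is $C^1$ for each fixed $x$.

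\textbf{Step 1 (upper bound for $\varphi_x$).} Fix a bounded subinterval $I=[a,b]\subset J$ and set $K:=\sup_{t\in I}\norm{u_x(t)}_\infty<\infty$ by hypothesis. For fixed $x$, the scalar function $g(t):=\varphi_x(t,x)$ satisfies $\abs{g_t(t)}\le K\abs{g(t)}$ on $I$; since $g$ is positive (as $\varphi\in\D{q}$) this yields, by Gronwall's lemma applied on $I$ with base point $a$,
\begin{equation*}
  \varphi_x(t,x)\le\varphi_x(a,x)\,e^{K\abs{t-a}}\le\norm{\varphi_x(a)}_\infty\,e^{K\abs{b-a}}
\end{equation*}
for all $t\in I$ and all $x\in\Circle$; taking the supremum over $x$ gives the desired bound on $\norm{\varphi_x}_\infty$ over $I$.

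\textbf{Step 2 (lower bound for $\varphi_x$, i.e.\ bound for $1/\varphi_x$).} The same scalar ODE gives the two-sided estimate $g_t(t)\ge -K g(t)$, so by Gronwall
\begin{equation*}
  \varphi_x(t,x)\ge\varphi_x(a,x)\,e^{-K\abs{t-a}}\ge\Bigl(\min_{y\in\Circle}\varphi_x(a,y)\Bigr)e^{-K\abs{b-a}}>0,
\end{equation*}
the last inequality because $\varphi(a)\in\D{q}$ has everywhere-positive derivative. Hence $\norm{1/\varphi_x}_\infty\le e^{K\abs{b-a}}\big/\min_y\varphi_x(a,y)$ is bounded on $I$. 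A symmetric argument handles subintervals to the left of $a$, which covers the case when $J$ extends below $a$; alternatively one simply runs the argument on an arbitrary $[a,b]\subset J$ using the better of the two endpoints as base point.

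\textbf{Main obstacle.} The only genuinely non-routine point is justifying the differentiated flow identity $\varphi_{tx}=(u_x\circ\varphi)\varphi_x$ at the available regularity: $u$ is merely $C^0(J,\HH q)$ and $\varphi$ merely $C^1(J,\D q)$, so one cannot naively differentiate under the composition. The clean way is to start from $\varphi_t(t)=u(t)\circ\varphi(t)$ as an identity in $\HH q$ (valid since the flow is $C^1$ into $\D q$), apply $D$ to obtain $\varphi_{tx}(t)=(u_x(t)\circ\varphi(t))\,\varphi_x(t)$ in $\HH{q-1}$ — using that $R_{\varphi(t)}$ maps $\HH{q-1}$ to $\HH{q-1}$ and that multiplication by $\varphi_x(t)\in\HH{q-1}$ is continuous on $\HH{q-1}$ since $q-1>1/2$ — and then evaluate at a point $x$, which is legitimate by Sobolev embedding. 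Once the pointwise scalar ODE $g_t=(u_x\circ\varphi)\,g$ is in hand, the Gronwall estimates in Steps 1 and 2 are immediate and uniform in $x$.
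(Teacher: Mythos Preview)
Your proof is correct and follows essentially the same approach as the paper: differentiate the flow equation to obtain $\varphi_{tx}=(u_x\circ\varphi)\varphi_x$, then use Gronwall. The only cosmetic difference is that the paper takes the spatial maximum first, setting $\alpha(t)=\max_x\varphi_x(t,x)$ and $\beta(t)=\max_x 1/\varphi_x(t,x)$, derives integral inequalities for $\alpha$ and $\beta$, and applies Gronwall to these; you instead fix $x$, apply Gronwall to the scalar ODE for $g(t)=\varphi_x(t,x)$, and then take the supremum over $x$. Your regularity discussion justifying the differentiated identity is more explicit than the paper's, which simply writes down the equations.
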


\begin{proof}
Let
\begin{equation*}
  \alpha(t) = \max_{x\in\Circle} \varphi_{x}(t)(x), \quad \text{and} \quad \beta(t) = \max_{x\in\Circle} 1/ \varphi_{x}(t)(x).
\end{equation*}
Note that $\alpha$ and $\beta$ are continuous functions. Let $I$ denote any bounded subinterval of $J$, and set
\begin{equation*}
  K = \sup_{t \in I} \norm{u_{x}(t)}_{\infty}.
\end{equation*}
From equation $\varphi_{t} = u \circ \varphi$, we deduce that
\begin{equation*}
  \varphi_{tx} = (u_{x} \circ \varphi ) \varphi_{x}, \quad \text{and} \quad \left( 1/\varphi_{x} \right)_{t}= -(u_{x}\circ \varphi) / \varphi_{x},
\end{equation*}
and therefore, we get
\begin{equation*}
  \alpha(t) \le \alpha(0) + K \int_{0}^{t}\alpha(s)\, ds
  \quad\text{and}\quad
  \beta(t) \le \beta(0) + K \int_{0}^{t}\beta(s)\, ds.
\end{equation*}
Thus the conclusion follows from Gronwall's lemma.
\end{proof}

\begin{lem}\label{lem:flow_bound_Hq}
Let $u\in C^{0}(J,\HH{q})$ with $q>3/2$ be a time-dependent vector field and assume that its associated flow $\varphi$ exists with $\varphi\in C^1(J,\D{q})$. If $\norm{u}_{H^{q}}$ is bounded on any bounded subinterval of $J$, then $\norm{\varphi_{x}}_{H^{q-1}}$ is bounded on any bounded subinterval of $J$.
\end{lem}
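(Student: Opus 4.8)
The plan is to run a Gronwall argument on $\norm{\varphi_x}_{H^{q-1}}$, exactly parallel to the $C^0$-estimate in Lemma~\ref{lem:flow_bound_C1}, but now in the $H^{q-1}$ norm. Fix a bounded subinterval $I\subset J$ and set $K:=\sup_{t\in I}\norm{u(t)}_{H^q}$; by Lemma~\ref{lem:flow_bound_C1} we may also set $L:=\sup_{t\in I}\norm{1/\varphi_x(t)}_\infty<\infty$ and $L':=\sup_{t\in I}\norm{\varphi_x(t)}_\infty<\infty$. Differentiating $\varphi_t=u\circ\varphi$ in $x$ gives $\varphi_{tx}=(u_x\circ\varphi)\,\varphi_x$, which holds as an identity in $H^{q-1}$ because $q>3/2$ ensures $H^{q-1}$ is (for $q\ge 3/2$) at worst an algebra-adjacent space; more precisely, since $\varphi\in C^1(J,\D{q})$ we have $\varphi_x\in C^1(J,H^{q-1})$ and the product rule is legitimate. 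Integrating in time,
\begin{equation*}
  \norm{\varphi_x(t)}_{H^{q-1}} \le \norm{\varphi_x(0)}_{H^{q-1}} + \int_0^t \norm{(u_x\circ\varphi)(\tau)\,\varphi_x(\tau)}_{H^{q-1}}\,d\tau,
\end{equation*}
so everything reduces to estimating $\norm{(u_x\circ\varphi)\varphi_x}_{H^{q-1}}$ by $C\cdot\norm{\varphi_x}_{H^{q-1}}$ with $C$ depending only on $K$, $L$, $L'$.

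For the product, I would split into the two regimes $3/2<q-1$ and $1/2<q-1\le 3/2$ (i.e. $q>5/2$ versus $3/2<q\le 5/2$). When $q-1>1/2$, $H^{q-1}$ is a multiplicative algebra, so
\begin{equation*}
  \norm{(u_x\circ\varphi)\varphi_x}_{H^{q-1}} \lesssim \norm{u_x\circ\varphi}_{H^{q-1}}\,\norm{\varphi_x}_{H^{q-1}},
\end{equation*}
and it remains to bound $\norm{u_x\circ\varphi}_{H^{q-1}} = \norm{R_\varphi(u_x)}_{H^{q-1}}$. Here I invoke the composition estimates \eqref{eq:Rphi_estimate_firstcase}--\eqref{eq:Rphi_estimate_fourthcase} with $\rho=q-1$: each gives $\norm{R_\varphi}_{\mathcal{L}(H^{q-1},H^{q-1})}$ controlled by a continuous function of $\norm{1/\varphi_x}_\infty$ and either $\norm{\varphi_x}_\infty$ or $\norm{\varphi_x}_{H^{q-2}}$ (and possibly the factor $\norm{\varphi_x}_{H^{q-2}}$), hence by a continuous function of $L$, $L'$ and $\norm{\varphi_x}_{H^{q-2}}\le\norm{\varphi_x}_{H^{q-1}}$. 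Combined with $\norm{u_x}_{H^{q-1}}\le\norm{u}_{H^q}\le K$, this yields
\begin{equation*}
  \norm{(u_x\circ\varphi)\varphi_x}_{H^{q-1}} \le C(K,L,L')\,\bigl(1+\norm{\varphi_x}_{H^{q-1}}\bigr)\,\norm{\varphi_x}_{H^{q-1}};
\end{equation*}
at first glance the quadratic factor $\norm{\varphi_x}_{H^{q-1}}^2$ looks fatal for Gronwall, so the key point is that in the cases \eqref{eq:Rphi_estimate_firstcase}--\eqref{eq:Rphi_estimate_secondcase} (available once $q-1\le 2$) $R_\varphi$ is bounded with a constant depending only on $L,L'$ and the lower-order norm $\norm{\varphi_x}_{H^{q-1}}$ — and when $q-1>2$ one bootstraps: the already-established boundedness of $\norm{\varphi_x}_{H^{q-2}}$ on $I$ (from the inductive hypothesis, stepping down from $q$ to $q-1$, or equivalently applying the lemma at regularity $q-1$ first) feeds the continuous function $C^k_\rho(\norm{1/\varphi_x}_\infty,\norm{\varphi_x}_{H^{\rho-2}})$, making it a genuine constant and leaving the estimate linear in $\norm{\varphi_x}_{H^{q-1}}$. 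Then Gronwall closes the argument.

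For the low-regularity case $3/2<q\le 5/2$ one has $0\le q-2\le 1/2$, the algebra property of $H^{q-1}$ may fail ($q-1$ could be $\le 1$), so I would instead estimate the product $(u_x\circ\varphi)\varphi_x$ using the fractional Leibniz / Kato--Ponce inequality (Lemma~\ref{Kato-Ponce}) together with $\norm{u_x\circ\varphi}_\infty\le K$ (Sobolev embedding $H^q\hookrightarrow C^1$ for $q>3/2$) and $\norm{\varphi_x}_\infty\le L'$, reducing matters again to $\norm{R_\varphi}_{\mathcal{L}(H^{q-1},H^{q-1})}$ which is covered by \eqref{eq:Rphi_estimate_firstcase}--\eqref{eq:Rphi_estimate_secondcase}. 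I expect the main obstacle to be precisely the bookkeeping that keeps the differential inequality \emph{linear} in $\norm{\varphi_x}_{H^{q-1}}$: one must be careful to route all "high" factors of $\varphi_x$ into $\norm{\varphi_x}_{H^{q-1}}$ at most once, pushing every genuinely nonlinear dependence down onto $\norm{\varphi_x}_{H^{q-2}}$, $\norm{\varphi_x}_\infty$, $\norm{1/\varphi_x}_\infty$, all of which are already bounded on $I$ by Lemma~\ref{lem:flow_bound_C1} and the inductive step. Once that is arranged, Gronwall's lemma gives the bound on $\norm{\varphi_x}_{H^{q-1}}$ over $I$.
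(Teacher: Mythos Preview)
Your plan is essentially the paper's: differentiate $\varphi_t=u\circ\varphi$, control $\norm{(u\circ\varphi)_x}_{H^{\rho}}$ linearly in $\norm{\varphi_x}_{H^{\rho}}$ via the composition estimates \eqref{eq:Rphi_estimate_firstcase}--\eqref{eq:Rphi_estimate_fourthcase}, close with Gronwall, and bootstrap in $\rho$. Two points deserve correction, though.

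First, a harmless slip: for $q>3/2$ you always have $q-1>1/2$, so $H^{q-1}(\Circle)$ is \emph{always} a multiplicative algebra in one dimension; your worry that ``the algebra property may fail'' is unfounded, and Kato--Ponce is not needed for the product.

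Second, there is a real gap in the range $2<q\le 5/2$, i.e.\ $\rho:=q-1\in(1,3/2]$. Your route is to write $\norm{(u_x\circ\varphi)\varphi_x}_{H^{\rho}}\lesssim\norm{R_\varphi u_x}_{H^{\rho}}\norm{\varphi_x}_{H^{\rho}}$ and then bound $\norm{R_\varphi}_{\mathcal L(H^{\rho},H^{\rho})}$. But for this range only \eqref{eq:Rphi_estimate_secondcase} applies, and its constant depends on $\norm{\varphi_x}_{H^{q-1}}$ itself, so the resulting differential inequality is not linear and Gronwall does not close. Induction does not rescue you either, since applying the lemma at level $q-1$ requires $q-1>3/2$. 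The paper sidesteps this by \emph{not} splitting the product: it bounds $\norm{(u\circ\varphi)_x}_{H^{\rho}}\le\norm{u\circ\varphi}_{H^{\rho+1}}=\norm{R_\varphi u}_{H^{\rho+1}}$ and applies \eqref{eq:Rphi_estimate_thirdcase} at level $\rho+1\in(2,5/2]$, where the constant depends only on $\norm{1/\varphi_x}_\infty$, $\norm{\varphi_x}_\infty$ and a single factor $\norm{\varphi_x}_{H^{\rho}}$ appears linearly. This covers all $\rho\le 2$ as a genuine base case; the induction via \eqref{eq:Rphi_estimate_fourthcase} then takes over for larger $\rho$. Once you adopt this one-level-up use of $R_\varphi$, your outline becomes the paper's proof.
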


\begin{proof}
Let $I$ denote any bounded subinterval of $J$. For $0 \le \rho \le q-1$, we have
\begin{equation*}
  \frac{d}{dt}\norm{\varphi_{x}}_{H^{\rho}}^{2} = 2\left\langle (u\circ\varphi)_{x}, \varphi_{x}\right\rangle_{H^{\rho}} \lesssim \norm{(u\circ\varphi)_{x}}_{H^{\rho}} \norm{\varphi_{x}}_{H^{\rho}}.
\end{equation*}

(i) Suppose first that $1/2 < \rho \le 1$. Invoking~\eqref{eq:Rphi_estimate_firstcase}, we get
\begin{align*}
  \norm{(u\circ\varphi)_{x}}_{H^{\rho}}
  & \lesssim \norm{u_{x}\circ\varphi}_{H^{\rho}} \norm{\varphi_{x}}_{H^{\rho}}
  \\
  & \lesssim C_{\rho}^{1}\left(\norm{1/\varphi_{x}}_{L^{\infty}},\norm{\varphi_{x}}_{L^{\infty}}\right) \norm{u}_{H^{q}} \norm{\varphi_{x}}_{H^{\rho}}.
\end{align*}
Therefore, using the fact that $\norm{\varphi_{x}}_{\infty}$ and $\norm{1/\varphi_{x}}_{\infty}$ are bounded on $I$ by virtue of Lemma~\ref{lem:flow_bound_C1}, we conclude by Gronwall's lemma that $\norm{\varphi_{x}}_{H^{\rho}}$ is bounded on $I$, for $0 \le \rho \le 1$.

(ii) Suppose now that $1 \le \rho \le 2$. Invoking~\eqref{eq:Rphi_estimate_thirdcase}, we get
\begin{align*}
  \norm{(u\circ\varphi)_{x}}_{H^{\rho}}
  & \lesssim \norm{u\circ\varphi}_{H^{\rho+1}}
  \\
  & \lesssim C_{\rho+1}^{3}\left(\norm{1/\varphi_{x}}_{L^{\infty}},\norm{\varphi_{x}}_{L^{\infty}}\right) \norm{\varphi_{x}}_{H^{\rho}} \norm{u}_{H^{q}}.
\end{align*}
and we conclude again by Gronwall's lemma that $\norm{\varphi_{x}}_{H^{\rho}}$ is bounded on $I$, for $0 \le \rho \le 2$.

(iii) Suppose finally that $\rho \ge 3$. Invoking~\eqref{eq:Rphi_estimate_fourthcase}, we get
\begin{align*}
  \norm{(u\circ\varphi)_{x}}_{H^{\rho}}
  & \lesssim \norm{u\circ\varphi}_{H^{\rho+1}}
  \\
  & \lesssim C_{\rho + 1}^{4}\left(\norm{1/\varphi_{x}}_{L^{\infty}},\norm{\varphi_{x}}_{H^{\rho-1}}\right) \norm{\varphi_{x}}_{H^{\rho}} \norm{u}_{H^{q}}.
\end{align*}
and we conclude by an induction argument on $\rho$ that $\norm{\varphi_{x}}_{H^{\rho}}$ is bounded on $I$ for $0 \le \rho \le q-1$. This completes the proof.
\end{proof}

\begin{thm}\label{thm:global_sol}
Let $r \ge 2$ and $q \ge r+1$. Assume that conditions~\ref{hyp:inertia_operator} are satisfied and let
\begin{equation*}
  (\varphi, v)\in C^\infty((t^{-},t^{+}),T\D{q})
\end{equation*}
denote the non-extendable solution of the geodesic flow~\eqref{eq:geodesic-equations}, emanating from
\begin{equation*}
  (\varphi_{0},v_{0})\in T\D{q}.
\end{equation*}
If the Eulerian velocity $u = v \circ \varphi^{-1}$ satisfies the estimate
\begin{equation}\label{eq:ux-estimate}
  \inf_{t \in [0,t^{+})}\left(\min_{x\in\Circle}\{u_{x}(t,x)\}\right) > - \infty,
\end{equation}
then $t^{+}=\infty$. A similar statement holds for $t^{-}$.
\end{thm}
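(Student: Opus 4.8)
\textbf{Proof plan for Theorem~\ref{thm:global_sol}.}

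The strategy is to contradict the blow-up dichotomy of Corollary~\ref{cor:blow-up} by showing that, under the lower bound~\eqref{eq:ux-estimate}, the quantity $d_q(\varphi_0,\varphi(t)) + \norm{v(t)}_{H^q}$ stays bounded on every bounded subinterval of $[0,t^+)$, so that $t^+ < \infty$ is impossible. Suppose for contradiction that $t^+ < \infty$; then $I := [0,t^+)$ is a bounded subinterval of $J$, and the hypothesis~\eqref{eq:ux-estimate} is precisely the statement that $\inf_{t\in I}(\min_x u_x(t,x)) > -\infty$. First I would invoke Lemma~\ref{lem:norm_conservation} to recall that $\norm{u(t)}_A$ is conserved; combined with Hypothesis~\ref{hyp:inertia_operator}(b) this gives an $a$~priori bound on $\norm{u(t)}_{H^{r/2}}$, but this is not quite what is needed. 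The real input is Theorem~\ref{thm:eulerian_velocity_bound}: since $r \ge 2$ and $q \ge r+1$, and since $u_x$ is bounded below on the bounded interval $I$, that theorem yields that $\norm{u}_{H^q}$ is bounded on $I$.

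Next I would pass from the Eulerian velocity to the Lagrangian data. Since $u \in C^0(J,\HH{q})$ with $q > 3/2$ and its flow is $\varphi \in C^1(J,\D{q})$ (recall $(\varphi,v)$ is the geodesic and $\varphi$ is the flow of $u$), I may apply Lemma~\ref{lem:flow_bound_C1}: because $\norm{u_x}_\infty \lesssim \norm{u}_{H^q}$ is bounded on $I$, both $\norm{\varphi_x}_\infty$ and $\norm{1/\varphi_x}_\infty$ are bounded on $I$. Then Lemma~\ref{lem:flow_bound_Hq} applies directly: $\norm{u}_{H^q}$ bounded on $I$ gives $\norm{\varphi_x}_{H^{q-1}}$ bounded on $I$. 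Finally, from $v = u \circ \varphi$ one gets $\norm{v(t)}_{H^q} = \norm{R_\varphi(u)}_{H^q} \le \norm{R_\varphi}_{\mathcal{L}(H^q,H^q)}\,\norm{u}_{H^q}$, and by Lemma~\ref{lem:Rphi_bounded} (applied with $\rho = q$, using that $\varphi(t)$ ranges over a $d_q$-bounded set, which in turn follows from the bounds on $d_0(\varphi_0,\varphi)$, $\norm{\varphi_x}_{H^{q-1}}$, and $\norm{1/\varphi_x}_\infty$ just obtained together with Lemma~\ref{lem:path_estimate}) this is bounded on $I$ as well. Alternatively, and more directly, one can observe that $d_q(\varphi_0,\varphi(t))$ is controlled by $d_0(\varphi_0,\varphi(t)) + \norm{\varphi_x(t) - \varphi_{0x}}_{H^{q-1}} + \norm{1/\varphi_x(t) - 1/\varphi_{0x}}_\infty$, each summand of which is now bounded on $I$.

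Assembling these bounds, $f(t) := d_q(\varphi_0,\varphi(t)) + \norm{v(t)}_{H^q}$ is bounded on $I = [0,t^+)$, which contradicts Corollary~\ref{cor:blow-up}. Hence $t^+ = \infty$; the argument for $t^-$ is identical after reversing time (replacing $u(t,x)$ by $-u(-t,x)$, under which the sign condition on $u_x$ is preserved). The only subtlety I anticipate is the bookkeeping needed to feed the hypotheses of Lemma~\ref{lem:flow_bound_Hq} and Lemma~\ref{lem:Rphi_bounded}: one must check that $\varphi$ is genuinely the flow of $u$ in the sense of those lemmas (this is built into the relation $v = \varphi_t = u\circ\varphi$) and that the $d_q$-boundedness of $\{\varphi(t) : t \in I\}$ is in place before applying Lemma~\ref{lem:Rphi_bounded}; but these are routine given the chain of estimates above, so I do not expect a genuine obstacle. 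The conceptual content is entirely in Theorem~\ref{thm:eulerian_velocity_bound}, which has already been established.
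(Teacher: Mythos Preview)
Your proposal is correct and follows essentially the same route as the paper: assume $t^{+}<\infty$, use Theorem~\ref{thm:eulerian_velocity_bound} to bound $\norm{u}_{H^{q}}$, then Lemmas~\ref{lem:flow_bound_C1} and~\ref{lem:flow_bound_Hq} to bound $\norm{\varphi_{x}}_{\infty}$, $\norm{1/\varphi_{x}}_{\infty}$, $\norm{\varphi_{x}}_{H^{q-1}}$, and Lemma~\ref{lem:Rphi_bounded} to bound $\norm{v}_{H^{q}}$, contradicting Corollary~\ref{cor:blow-up}. One small cleanup: invoking Lemma~\ref{lem:path_estimate} to control $d_{0}(\varphi_{0},\varphi(t))$ is circular (that lemma uses $\norm{v}_{H^{q}}$, which is what you are trying to bound), but this is harmless since $d_{0}$ is trivially bounded on $C^{0}(\Circle,\Circle)$ --- or, if you prefer, since $\norm{\varphi_{t}}_{\infty}=\norm{u\circ\varphi}_{\infty}=\norm{u}_{\infty}$ is already bounded.
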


\begin{proof}
Assume that $t^{+} < \infty$ and that estimate~\eqref{eq:ux-estimate} holds. In view of Theorem~\ref{thm:eulerian_velocity_bound} we conclude that $\norm{u}_{H^{q}}$ is bounded on $[0,t^{+})$. By Lemma~\ref{lem:flow_bound_C1} we get furthermore that $\norm{\varphi_{x}}_{\infty}$, and $\norm{1/\varphi_{x}}_{\infty}$ are bounded on $[0,t^{+})$ and by Lemma~\ref{lem:flow_bound_Hq} we know that $\norm{\varphi_{x}}_{H^{q-1}}$ is bounded on $[0,t^{+})$. We obtain therefore that $\norm{v}_{H^{q}}$ is bounded on $[0,t^{+})$, by virtue of Lemma~\ref{lem:Rphi_bounded}. Therefore, we deduce that
\begin{equation*}
  d_{q}(\varphi_{0},\varphi(t)) + \norm{v(t)}_{H^{q}}
\end{equation*}
is bounded on $[0,t^{+})$. But this contradicts Corollary~\ref{cor:blow-up} which shows that
\begin{equation*}
  \lim_{t\uparrow t^{+}} \left[d(\varphi_{0},\varphi(t)) + \norm{v(t)}_{H^{q}}\right] = +\infty.
\end{equation*}
as soon as $t^{+} < + \infty$.
\end{proof}

Theorem~\ref{thm:main} follows from Theorem~\ref{thm:global_sol} and Lemma~\ref{lem:norm_conservation} in combination with Sobolev's embedding Theorem.

\begin{rem}
  The same conclusion holds for the weak Riemannian metric induced by any inertia operator $A$ of order $r > 3$ and satisfying presupposition~\ref{hyp:inertia_operator}, because then the norm
  \begin{equation*}
    \norm{u}_{A} := \langle Au,u\rangle_{L^{2}}
  \end{equation*}
  is equivalent to the $H^{r/2}$-norm.
\end{rem}

\appendix

\section{Friedrichs mollifiers}
\label{app:mollifiers}

\emph{Friedrichs mollifiers} were introduced by Kurt Otto Friedrichs in \cite{Fri1944}. We briefly recall the construction for periodic functions (see~\cite{Kra1992} for more details). Let $\rho$ be a nonnegative, even, smooth bump function of total weight $1$ and supported in $(-1/2,1/2)$. We set
\begin{equation*}
  \rho_{\epsilon} (x) := \frac{1}{\varepsilon} \rho\left(\frac{x}{\varepsilon}\right),
\end{equation*}
and define the \emph{Friedrichs' mollifer} $J_{\varepsilon}$ as the operator
\begin{equation*}
  J_{\varepsilon}u = \rho_{\epsilon} \ast u,
\end{equation*}
where $\ast$ denotes the convolution. Note that if $u \in L^{2}(\Circle)$, then $J_{\varepsilon}u \in \CS$ and that $J_{\varepsilon}$ is a bounded operator from $L^{2}(\Circle)$ to $\HH{q}$ for any $q \ge 0$.

The operator $J_{\varepsilon}$ is a Fourier multiplier. Thus it commutes with any other Fourier multiplier, in particular with the spatial derivative $D$. It commutes of course also with temporal derivative $\partial_{t}$ for functions depending on $(t,x)\in \mathbb{R}\times \Circle$. Note also that $J_{\varepsilon}$ is symmetric with respect to the $L^{2}$ scalar product. The main properties of $J_{\varepsilon}$ that have been used in this paper are the following.

\begin{lem}\label{lem:J_approximation}
Given $q \ge 0$ and $u\in \HH{q}$, then
\begin{equation}\label{eq:FM2}
  \norm{J_{\varepsilon}u - u}_{H^{q}} \to 0,\quad  \text{as}\ \varepsilon \to 0.
\end{equation}
\end{lem}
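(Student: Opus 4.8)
The final statement to prove is Lemma~\ref{lem:J_approximation}: for $q \ge 0$ and $u \in \HH{q}$, one has $\norm{J_\varepsilon u - u}_{H^q} \to 0$ as $\varepsilon \to 0$.

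\medskip

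The plan is to work entirely on the Fourier side, since $J_\varepsilon$ is a Fourier multiplier. First I would compute the symbol of $J_\varepsilon$: writing $\widehat{(J_\varepsilon u)}_n = \widehat{\rho_\varepsilon}(n)\, \hat u_n$, and using that $\rho$ is a fixed bump function of total mass $1$, we get $\widehat{\rho_\varepsilon}(n) = \hat\rho(\varepsilon n)$ where $\hat\rho(\xi) = \int \rho(x) e^{-i x \xi}\,dx$ is the (continuous, real-valued since $\rho$ is even) Fourier transform of $\rho$ on $\RR$, normalized so that $\hat\rho(0) = 1$. Hence
\begin{equation*}
  \norm{J_\varepsilon u - u}_{H^q}^2 = \sum_{n \in \ZZ} (1+n^2)^q \, \abs{\hat\rho(\varepsilon n) - 1}^2 \, \abs{\hat u_n}^2 .
\end{equation*}
Each term tends to $0$ as $\varepsilon \to 0$ because $\hat\rho$ is continuous and $\hat\rho(0)=1$, so $\hat\rho(\varepsilon n) \to 1$ for every fixed $n$.

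\medskip

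The substance of the argument is then a dominated-convergence (or equivalently, split-the-tail) step to upgrade this termwise convergence to convergence of the sum. Since $\rho \in L^1(\RR)$ with $\norm{\rho}_{L^1} = 1$, we have the uniform bound $\abs{\hat\rho(\xi)} \le 1$ for all $\xi \in \RR$, whence $\abs{\hat\rho(\varepsilon n) - 1}^2 \le 4$ for all $n$ and all $\varepsilon > 0$. Thus the summands are dominated by the fixed summable sequence $4\,(1+n^2)^q \abs{\hat u_n}^2$, whose sum is $4\norm{u}_{H^q}^2 < \infty$ because $u \in \HH{q}$. Applying the dominated convergence theorem for series (counting measure on $\ZZ$) gives $\norm{J_\varepsilon u - u}_{H^q}^2 \to 0$, which is the claim. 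Alternatively, without invoking dominated convergence: given $\delta > 0$, choose $N$ so that $\sum_{\abs n > N} (1+n^2)^q \abs{\hat u_n}^2 < \delta/8$; the tail of $\norm{J_\varepsilon u - u}_{H^q}^2$ is then $\le 4 \cdot \delta/8 = \delta/2$ uniformly in $\varepsilon$, while the finite head $\sum_{\abs n \le N} (1+n^2)^q \abs{\hat\rho(\varepsilon n)-1}^2 \abs{\hat u_n}^2$ is a finite sum each of whose terms goes to $0$, hence is $< \delta/2$ for $\varepsilon$ small.

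\medskip

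There is no real obstacle here; the only point requiring a word of care is justifying $\abs{\hat\rho(\xi)} \le 1$ and the identity $\widehat{\rho_\varepsilon}(n) = \hat\rho(\varepsilon n)$ — both are immediate from $\int \rho = 1$, $\rho \ge 0$, and the scaling $\rho_\varepsilon(x) = \varepsilon^{-1}\rho(x/\varepsilon)$, interpreting $\rho$ as a $1$-periodic function (legitimate since $\rho$ is supported in $(-1/2,1/2)$, so no wrap-around occurs for $\varepsilon \le 1$, and for the limit $\varepsilon \to 0$ only small $\varepsilon$ matter). One should also note $\hat\rho$ is real-valued, which is why $J_\varepsilon$ is $L^2$-symmetric, though that fact is not needed for the convergence itself. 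I would present the proof in the two-step form above: identify the symbol, then run the uniform-bound-plus-termwise-limit argument.
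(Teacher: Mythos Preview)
Your proof is correct; the Fourier-side argument with the uniform bound $\abs{\hat\rho(\xi)}\le 1$ and dominated convergence (or the equivalent tail-splitting) is the clean way to do this, and it meshes well with the paper's remark that $J_\varepsilon$ is a Fourier multiplier. The paper itself does not give a proof of this lemma at all: it simply records the result as classical and refers to \cite[Lemma~3.15]{Ada1975}, so there is nothing to compare against beyond noting that your direct argument is entirely self-contained and appropriate here.
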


Lemma~\ref{lem:J_approximation} is a classical result. Its proof can be found in~\cite[Lemma 3.15]{Ada1975}), for instance.

\begin{lem}\label{lem:J_norm_estimate}
We have
\begin{equation}\label{eq:FM1}
  \norm{J_{\varepsilon}u}_{H^{q}} \lesssim \norm{u}_{H^{q}}, \quad \forall u \in \HH{q},
\end{equation}
\emph{uniformly} in $\varepsilon \in (0,1]$ and $q \ge 0$.
\end{lem}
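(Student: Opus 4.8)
The plan is to exploit that $J_{\varepsilon}$ is a Fourier multiplier whose symbol is bounded by $1$ uniformly in $\varepsilon$; once this is in hand, the $\HH{q}$-bound is immediate from the Fourier-series definition of the $H^{q}$-norm recalled in Section~\ref{sec:righ-invariant-metrics}. First I would record that, since $J_{\varepsilon}u = \rho_{\epsilon}\ast u$ with $\rho_{\epsilon}\in L^{1}(\Circle)$, convolution on the circle turns into multiplication of Fourier coefficients: $\widehat{(J_{\varepsilon}u)}_{n} = m_{\varepsilon}(n)\,\hat u_{n}$, where $m_{\varepsilon}(n) := \widehat{(\rho_{\epsilon})}_{n} = \int_{\Circle}\rho_{\epsilon}(x)\,e^{-inx}\,dx$. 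This is the observation, already used in the text, that $J_{\varepsilon}$ is a Fourier multiplier.

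Next I would estimate the symbol uniformly in $\varepsilon$ and $n$. For $\varepsilon\in(0,1]$, the hypothesis that $\rho$ is supported in $(-1/2,1/2)$ guarantees that $\rho_{\epsilon}$ is supported in $(-\varepsilon/2,\varepsilon/2)$, hence inside a single fundamental domain of $\Circle$; the substitution $y=x/\varepsilon$ then gives $\int_{\Circle}\abs{\rho_{\epsilon}}\,dx = \int_{\RR}\abs{\rho}\,dy = 1$, so the total weight is preserved. Consequently
\[
  \abs{m_{\varepsilon}(n)} = \left\vert \int_{\Circle}\rho_{\epsilon}(x)\,e^{-inx}\,dx \right\vert \le \int_{\Circle}\abs{\rho_{\epsilon}(x)}\,dx = 1
\]
for every $n\in\ZZ$ and every $\varepsilon\in(0,1]$. (Equivalently, $m_{\varepsilon}(n)=\hat\rho(\varepsilon n)$, where $\hat\rho$ is the Euclidean Fourier transform of $\rho$, and $\norm{\hat\rho}_{\infty}\le\norm{\rho}_{L^{1}(\RR)}=1$.)

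Finally I would plug this bound into the definition of the norm: for $u\in\HH{q}$ and $q\ge 0$,
\[
  \norm{J_{\varepsilon}u}_{H^{q}}^{2} = \sum_{n\in\ZZ}(1+n^{2})^{q}\,\abs{m_{\varepsilon}(n)}^{2}\,\abs{\hat u_{n}}^{2} \le \sum_{n\in\ZZ}(1+n^{2})^{q}\,\abs{\hat u_{n}}^{2} = \norm{u}_{H^{q}}^{2},
\]
which is in fact slightly stronger than the asserted inequality, since the constant equals $1$ and depends neither on $\varepsilon$ nor on $q$.

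I do not expect any genuine obstacle here: the estimate is a one-line consequence of the Fourier characterization of $H^{q}(\Circle)$. The only point requiring care is the bookkeeping ensuring that convolution on $\Circle$ behaves like convolution on $\RR$ — that is, that $\operatorname{supp}\rho\subset(-1/2,1/2)$ together with $\varepsilon\le 1$ prevents $\rho_{\epsilon}$ from wrapping around the circle, so that $\norm{\rho_{\epsilon}}_{L^{1}(\Circle)}=\norm{\rho}_{L^{1}(\RR)}=1$. This is precisely the role played by the restriction $\varepsilon\in(0,1]$ in the statement.
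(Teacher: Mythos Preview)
Your proof is correct and is essentially the same as the paper's: the paper invokes Young's inequality $\norm{\rho_{\varepsilon}\ast w}_{L^{2}}\le\norm{\rho_{\varepsilon}}_{L^{1}}\norm{w}_{L^{2}}$ together with the commutation $[\Lambda^{q},J_{\varepsilon}]=0$, which is exactly your Fourier-side computation $\abs{m_{\varepsilon}(n)}\le\norm{\rho_{\varepsilon}}_{L^{1}}=1$ unpacked in coefficients. Both arguments give the same constant~$1$, uniform in $\varepsilon\in(0,1]$ and $q\ge 0$.
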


The proof of Lemma~\ref{lem:J_norm_estimate} is a consequence of the following special case of \emph{Young's inequality} (\cite[Theorem 2.2, Chapter 1]{Pet1983})
\begin{equation}\label{eq:Young_inequality}
  \norm{f \ast u}_{L^{2}} \lesssim \norm{f}_{L^{1}} \norm{u}_{L^{2}},
\end{equation}
and the fact that $\Lambda^{q}$ and $J_{\varepsilon}$ commute.

Finally, we have been using the following commutator estimate on $[J_{\varepsilon}, uD]$.

\begin{lem}\label{lem:J_commutator_estimate}
  Let $u \in C^{1}(\Circle)$ and $m \in L^{2}(\Circle)$. Then
  \begin{equation*}
    \norm{J_{\varepsilon}(um_{x}) - uJ_{\varepsilon}(m_{x})}_{L^{2}} \lesssim \norm{u_{x}}_{\infty} \norm{m}_{L^{2}},
  \end{equation*}
  uniformly in $\varepsilon \in (0,1]$.
\end{lem}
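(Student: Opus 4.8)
The plan is to split the commutator $[J_{\varepsilon},uD]m := J_{\varepsilon}(um_{x}) - uJ_{\varepsilon}(m_{x})$ into two pieces: a genuine convolution term, which is controlled by Young's inequality, and a term with a singular integral kernel, which is controlled by Schur's test once the singularity of $\rho_{\varepsilon}'$ is balanced against the Lipschitz regularity of $u$.

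First I would record, using that $J_{\varepsilon}$ is convolution with $\rho_{\varepsilon}$, that $J_{\varepsilon}$ commutes with $D$, and interpreting $um_{x}$ through the distributional identity $um_{x} = (um)_{x} - u_{x}m$ (which makes sense since $u\in C^{1}$ and $um,\,u_{x}m\in L^{2}(\Circle)$), the pointwise formula
\begin{equation*}
  J_{\varepsilon}(um_{x})(x) - u(x)\,J_{\varepsilon}(m_{x})(x) = \int_{\Circle} \rho_{\varepsilon}'(x-y)\bigl(u(y)-u(x)\bigr)\,m(y)\,dy \;-\; \bigl(\rho_{\varepsilon} \ast (u_{x}m)\bigr)(x).
\end{equation*}
The second term on the right is immediate: Young's inequality \eqref{eq:Young_inequality} together with $\norm{\rho_{\varepsilon}}_{L^{1}} = 1$ gives $\norm{\rho_{\varepsilon} \ast (u_{x}m)}_{L^{2}} \le \norm{u_{x}}_{\infty}\norm{m}_{L^{2}}$, uniformly in $\varepsilon$.

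For the first term I would view it as the integral operator with kernel $K_{\varepsilon}(x,y) := \rho_{\varepsilon}'(x-y)\bigl(u(y)-u(x)\bigr)$. Since $\rho_{\varepsilon}$, and hence $\rho_{\varepsilon}'$, is supported in an arc of length $\varepsilon < 1$, on the support of $K_{\varepsilon}$ the arc-length distance between $x$ and $y$ is at most $\varepsilon/2$, so the mean value inequality yields $\abs{u(y)-u(x)} \le \norm{u_{x}}_{\infty}\abs{x-y}$ there, whence $\abs{K_{\varepsilon}(x,y)} \le \norm{u_{x}}_{\infty}\,\abs{x-y}\,\varepsilon^{-2}\,\bigl|\rho'((x-y)/\varepsilon)\bigr|$. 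The change of variables $x-y = \varepsilon w$ then shows that both $\sup_{x\in\Circle}\int_{\Circle}\abs{K_{\varepsilon}(x,y)}\,dy$ and $\sup_{y\in\Circle}\int_{\Circle}\abs{K_{\varepsilon}(x,y)}\,dx$ are bounded by $\norm{u_{x}}_{\infty}\int_{\RR}\abs{w}\,\abs{\rho'(w)}\,dw$, a quantity independent of $\varepsilon\in(0,1]$. Schur's test then bounds the $L^{2}$-operator norm of this piece by a constant times $\norm{u_{x}}_{\infty}$, and adding the two contributions finishes the argument.

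The main obstacle --- really the whole point --- is the uniformity in $\varepsilon$ of the singular term: the kernel $\rho_{\varepsilon}'$ has size $\varepsilon^{-2}$, and the estimate survives only because the difference $u(y)-u(x) = O(\abs{x-y})$ kills one power of $\varepsilon$ on the shrinking support while the rescaling in Schur's test absorbs the other. No cancellation beyond this crude gain is needed, which is why $u\in C^{1}$, equivalently $u_{x}\in L^{\infty}$, already suffices; the only other point requiring care is the distributional interpretation of $um_{x}$ for $m\in L^{2}$, handled by passing through $um_{x} = (um)_{x} - u_{x}m$ before mollifying.
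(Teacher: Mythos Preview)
Your proof is correct and follows essentially the same approach as the paper: the identical two-term decomposition into $-\rho_{\varepsilon}\ast(u_{x}m)$ plus the integral with kernel $\rho_{\varepsilon}'(x-y)(u(y)-u(x))$, with the first term handled by Young's inequality and the second by balancing the $\varepsilon^{-1}$ in $\|\rho_{\varepsilon}'\|_{L^{1}}$ against the $\varepsilon$ gained from the Lipschitz bound on $u$. The only cosmetic difference is that the paper uses the cruder bound $|u(x)-u(y)|\le \varepsilon\|u_{x}\|_{\infty}$ on the support and then Young's inequality for $|\rho_{\varepsilon}'|\ast|m|$, whereas you keep the factor $|x-y|$ and invoke Schur's test; both routes yield the same uniform constant.
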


\begin{proof}
Let $u \in C^{1}(\Circle)$. Note first that the linear operator
\begin{equation*}
  K_{\varepsilon} (m) := J_{\varepsilon}(um_{x}) - uJ_{\varepsilon}(m_{x}),
\end{equation*}
defined on $\CS$, is an integral operator with kernel
\begin{equation*}
  k_{\varepsilon}(x,y) = \frac{\partial}{\partial y} \left\{ (u(x)-u(y))\rho_{\varepsilon}(x-y)\right\}.
\end{equation*}
We have therefore
\begin{equation}\label{eq:FM3}
  K_{\varepsilon} (m) = - \rho_{\varepsilon} \ast (u_{x}m) - \int_{\Circle} \rho_{\varepsilon}^{\prime}(x-y) [u(x)-u(y)] m(y) \, dy.
\end{equation}
By virtue of Young's inequality~\eqref{eq:Young_inequality}, the $L^{2}$-norm of the first term of the right hand-side of~\eqref{eq:FM3} is bounded (up to some positive constant independent of $\varepsilon$) by
\begin{equation*}
  \norm{\rho_{\varepsilon}}_{L^{1}} \norm{u_{x}m}_{L^{2}} \le \norm{u_{x}}_{\infty} \norm{m}_{L^{2}},
\end{equation*}
because $\norm{\rho_{\varepsilon}}_{L^{1}}=1$. The $L^{2}$ norm of the second term of the right hand-side of~\eqref{eq:FM3} is bounded by
\begin{equation*}
  \left(\varepsilon \norm{u_{x}}_{\infty}\right) \norm{\rho_{\varepsilon}^{\prime}\ast m}_{L^{2}},
\end{equation*}
because the support of $\rho_{\varepsilon}$ is contained in $[-\varepsilon/2,\varepsilon/2]$. Using again Young's inequality~\eqref{eq:Young_inequality}, we get then
\begin{equation*}
  \norm{\rho_{\varepsilon}^{\prime}\ast m}_{L^{2}} \lesssim \norm{\rho_{\varepsilon}^{\prime}}_{L^{1}} \norm{m}_{L^{2}} \lesssim \frac{1}{\varepsilon} \norm{m}_{L^{2}},
\end{equation*}
because $\norm{\rho_{\varepsilon}^{\prime}}_{L^{1}} = \mathcal{O}(1/\varepsilon)$. This concludes the proof.
\end{proof}


\end{document}